\def\Pos{\mbox{\rm {\sf pos}}}
\def\Len{\mbox{\rm {\sf len}}}
\algnewcommand\And{\bf{and }}
\algnewcommand\Or{\bf{or }}
\algnewcommand\False{\bf{false }}
\algnewcommand\True{\bf{true }}
\algnewcommand\To{\bf{to }}
\algnewcommand\Down{\bf{down }}
\newcommand{\Oh}{{\cal O}}
\newcommand{\Lh}{{\cal L}}
\newcommand{\eMS}{{\sf eMS}\xspace}
\newcommand{\MUMs}{{\sf MUMs}\xspace}
\newcommand{\MUM}{{\sf MUM}\xspace}
\def\ours{\mbox{\rm {\sc mum-phinder}}\xspace}
\newlength\mylen
\newcolumntype{C}{>{\hfil$}p{\mylen}<{$\hfil}} %
\title{Computing Maximal Unique Matches with the  $r$-index}
\author{Sara Giuliani}{Department of Computer Science, University of Verona, Italy}{sara.giuliani\_01@univr.it}{https://orcid.org/0000-0002-1179-3929}{}
\author{Giuseppe Romana}{Department of Computer Science, University of Palermo, Italy}{giuseppe.romana01@unipa.it}{https://orcid.org/0000-0002-3489-0684
}{}
\author{Massimiliano Rossi}{Department of Computer and Information Science and Engineering, University of Florida, USA}{rossi.m@ufl.edu}{https://orcid.org/0000-0002-3012-1394}{National Science Foundation NSF EAGER (Grant No. 2118251), and National Institutes of Health (NIH) NIAID (Grant No. HG011392).}
\authorrunning{S. Giuliani, G. Romana, and M. Rossi} %
\keywords{Burrows--Wheeler Transform, r-index, maximal unique matches, bioinformatics, pangenomics} %
\begin{document}

\maketitle      %

\begin{abstract}

In recent years, pangenomes received increasing attention from the scientific community for their ability to incorporate population variation information and alleviate reference genome bias. Maximal Exact Matches (\MEMs) and Maximal Unique Matches (\MUMs) have proven themselves to be useful in multiple bioinformatic contexts, for example short-read alignment and multiple-genome alignment. However, standard techniques using suffix trees and FM-indexes do not scale to a pangenomic level. Recently, Gagie et al. [JACM 20] introduced the $r$-index that is a Burrows-Wheeler Transform (\BWT)-based index able to handle hundreds of human genomes. Later, Rossi et al. [JCB 22] enabled the computation of \MEMs using the $r$-index, and Boucher et al. [DCC 21] showed how to compute them in a streaming fashion.

In this paper, we show how to augment Boucher et al.'s approach to enable the computation of \MUMs on the $r$-index, while preserving the space and time bounds. We add additional $\Oh(r)$ samples of the longest common prefix (\LCP) array, where $r$ is the number of equal-letter runs of the \BWT, that permits the computation of the second longest match of the pattern suffix with respect to the input text, which in turn allows the computation of candidate \MUMs. 
We implemented a proof-of-concept of our approach, that we call \ours, and tested on real-world datasets. We compared our approach with competing methods that are able to compute \MUMs. We observe that our method is up to 8 times smaller, while up to 19 times slower when the dataset is not highly repetitive, while on highly repetitive data, our method is up to 6.5 times slower and uses up to 25 times less memory.

\end{abstract}

\section{Introduction}

With the advent of third-generation sequencing, the quality of assembled genomes drastically increased. In the last year the Telomere-to-Telomere project released the first complete haploid human genome~\cite{Nurk_2021-og} and the Human Pangenome Reference Consortium (HPRC) plans to release hundreds of high-quality assembled genomes to be used as a pangenome reference. One important step to enable the use of these high-quality assembled genomes is to build a multiple-sequence alignment of the genomes. Tools like MUMmer~\cite{Kurtz2004-cf,Marcais2018-ys}, and Mauve~\cite{Darling2004-en} proposed a solution to the original problem of multiple-sequence alignment by using Maximal Unique Matches (MUMs) between two input sequences as prospective anchors for an alignment. MUMs are long stretches of the genomes that are equal in both genomes and occur only once in each of them. To reduce the computational costs of computing the MUMs, progressive approaches have also been developed like progressive Mauve~\cite{Darling2010-jw} and progressive Cactus~\cite{Armstrong2020-fi} that enables the construction of pangenome graphs, among others, that have been used in recent aligners like Giraffe~\cite{Siren2021-oo}. MUMs have also been proven useful for strain level read quantification~\cite{Zhu2020-hs}, and as a computationally efficient genomic distance measure~\cite{Deloger2009-ij}.

Recent advances in pangenomics~\cite{Rossi2022-ah, DBLP:conf/dcc/BoucherGIKLMNP021} demonstrated that it is possible to index hundreds of Human Genomes and to query such an index to find supersets of MUMs that are maximal exact matches (MEMs), which are substrings of the pattern that occur in the reference and that cannot be extended neither on the left nor on the right. The tool called MONI~\cite{Rossi2022-ah} requires two passes over the query sequence to report the MEMs. Later PHONI~\cite{DBLP:conf/dcc/BoucherGIKLMNP021} showed how to modify the query to compute the MEMs in a streaming fashion, with only one single pass over the query string. Both MONI and PHONI are built on top of an $r$-index~\cite{GagieNP20} and a straight-line program SLP~\cite{Gagie2020-lg}. Their main objective is to compute the so called matching statistics (see Definition~\ref{def:ms}) of the pattern with respect to the text, that can be used to compute the MEMs with a linear scan. While, MONI uses the SLP for random access to the text, and needs to store additional information to compute the matching statistics and the MEMs, PHONI uses the SLP to compute efficient longest common extension (LCE) queries which allow to compute the matching statistics and the MEMs with only one scan of the query. 

We present \ours, a tool that is able to compute MUMs of a query pattern against an index on a commodity computer. The main observation of our approach is to extend the definition of matching statistics to include, for each suffix of the pattern, the information of the length of the second longest match of the suffix in the text, which allows to decide whether a MEM is also unique. We extended PHONI to keep track at each step of the query, the second longest match of the pattern in the index, and its length. To do this, we add $O(r)$ samples of the longest common prefix (\LCP) array to PHONI. 

We evaluated our algorithm on real-world datasets, and we tested \ours against MUMmer~\cite{Marcais2018-ys}. We measured time and memory required by both tools for sets of increasing size of haplotypes of human chromosome 19 and SARS-CoV2 genomes and queried using one haplotype of chromosome 19 and one SARS-CoV2 genome not present in the dataset. We report that \ours requires consistently less memory than MUMer for all experiments being up to 25 times smaller. Although MUMer is generally faster than ours (18 times faster for 1 haplotype of chromosome 19, and 6.5 times faster for 12,500 SARS-CoV2 genomes), it cannot process longer sequences due to memory limitations. Additionally, we observe that when increasing the number of sequences in the dataset, the construction time of \ours\ increases, while the query time decreases. This phenomenon is due to the increase in the number of matches in the search process, that prevents the use of more computational-demanding operations. 
Note that, due to the use of the $r$-index, the efficiency of our method increases when the dataset is highly repetitive as in the case of pangenomes. 

\section{Preliminaries}
Let $\Sigma = \{a_0 < a_1 < \ldots < a_{\sigma-1}\}$ be an \emph{ordered alphabet}, where $<$ represents the lexicographical order.
A \emph{string} (or \emph{text}) $T$ is a sequence of characters $T[0] T[1] \cdots T[n-1]$ such that $T[j] \in \Sigma$ for all $j \in [0..n)$.
The length of a string is denoted by $|T|$.
We refer to the empty string with $\varepsilon$, that is the only substring of length $0$.

We denote a factor (or substring) of $T$ as $T[i..j) = T[i] T[i+1] \cdots T[j-1]$ if $i < j$, and $T[i..j) = \varepsilon$ otherwise. We refer to $T[0..j)$ as the $j-1$-th prefix of $T$ and to $T[i..n)$ as the $i$-th suffix of $T$.

We assume throughout the paper that the text $T$ is terminated by termination character \$ that does not occur in the original text and it is lexicographically smaller than all the other characters in the alphabet.

\paragraph*{Suffix array, inverse suffix array, and longest common prefix array}
The \emph{Suffix array} (\SA) of a string $T[0..n)$ is an array of length $n$ such that $T[\SA[i]..n) < T[\SA[j]..n)$ for any $0 \leq i < j < n$.
The \emph{Inverse Suffix array} (\ISA) is the inverse of \SA, i.e.\ $\ISA[i] = j$ if and only if $\SA[j] = i$.
Let $lcp(u,v)$ be the length of the longest common prefix between two strings $u$ and $v$, that is $u[0..lcp(u,v)) = v[0..lcp(u,v))$ but $u[lcp(u,v)] \neq v[lcp(u,v)]$ (assuming $lcp(u,v) < \min\{|u|,|v|\}$).
The \emph{Longest Common Prefix array} (\LCP) of $T[0..n)$ is an array of length $n$ such that $\LCP[0] = 0$ and $\LCP[i] = lcp(T[\SA[i-1]..n), T[\SA[i]..n))$, for any $0< i < n$.

\paragraph*{Burrows-Wheeler Transform, Run-Length Encoding, and $r$-index}
The \emph{Burrows-Wheeler Transform} (\BWT) of $T$ is a reversible transformation of the characters of $T$~\cite{Bwt}.
That is the concatenation of the characters preceding the suffixes of $T$ listed in lexicographic order, i.e., for all $0 \leq i < n$, $\BWT[i] = T[\SA[i]-1 \mod n]$. 
The $\LF$-\emph{mapping} is the function that maps every character in the \BWT with its preceding text character, in the \BWT, i.e. $\LF(i) = \ISA[\SA[i]-1 \mod n]$.

The \emph{run-length encoding} of a string $T$ is the representation of maximal equal-letter runs of $T$ as pairs $(c, \ell)$, where $c$ is the letter of the run and $\ell > 0$ is the length of the run. For example, the run length encoding of $T = AAACAAGGGG$ is $(A,3)(C,1)(A,2)(G,4)$.
We refer to the number of runs of the \BWT\ with $r$.

The $\BWT$ tends to create long equal-letter runs on highly repetitive texts such as genomic datasets. The run-length encoding applied to the \BWT (in short RLBWT) is the basis of many lossless data compressors and text indexes, such as the FM-index~\cite{DBLP:conf/focs/FerraginaM00} which is the base of widely used bioinformatics tools such as Bowtie~\cite{Bowtie} and BWA~\cite{bwa}.
Although the \BWT\ can be stored and queried in compressed space~\cite{Mkinen2005SuccinctSA}, the number of samples of the \SA\ required by the index grows with the length of the uncompressed text.
To overcome this issue Gagie et al.~\cite{GagieNP20} proposed the $r$\emph{-index} whose number of \SA\ samples grows with the number of runs $r$ of the \BWT. 
The $r$-index is a text index composed by the run-length encoded \BWT\ and the \SA\ sampled at run boundaries, i.e., in correspondence of the first and last character of a run of the \BWT, and it is able to retrieve the missing values of the \SA\ by using a predecessor data structure on the samples of the \SA.

\paragraph*{Grammar and straight-line program}
A \emph{context-free grammar} $\mathcal{G} = \{V, \Sigma, R, S\}$ consists in a set of \emph{variables} $V$, a set of \emph{terminal symbols} $\Sigma$, a set of \emph{rules} $R$ of the type $A \mapsto \alpha$, where $A \in V$ and $\alpha \in \{V\cup \Sigma\}^*$, and the \emph{start variable} $S \in V$.
The \emph{language of the grammar} $\Lh(\mathcal{G}) \subseteq \Sigma^*$ is the set of all words over the alphabet of terminal symbols generated after applying some rules in $R$ starting from $S$.
When $\Lh(\mathcal{G})$ contains only one string $T$, that is $\mathcal{G}$ only generates $T$, then the grammar $\mathcal{G}$ is called \emph{straight-line program} (SLP).

\paragraph*{Longest Common Extension, \Rank, and \Select queries}
Given a text $T[0..n)$, the \emph{longest common extension} (\LCE) query between two positions $0\leq i,j <n$ in $T$ is the length of the longest common prefix of $T[i..n)$ and $T[j..n)$.
Thus, if $\ell = \LCE(i,j)$, then $T[i..i+\ell) = T[j..j+\ell)$ and either $T[i+\ell] \neq T[j+\ell]$ or either $i+\ell = n$ or $j+\ell = n$.

Given a character $c$ and an integer $i$, we define $T.\Rank_c(i)$ as the number of occurrences of the character $c$ in the prefix $T[0..i)$, while we define $T.\Select_c(i)$ as the position $p \in [0..n)$ of the $i$th occurrence of $c$ in $T$ if it exists, and $p = n$ otherwise.

\section{Computing MUMs using MS}

Given a text $T[0..n)$ and a pattern $P[0..m)$, we refer to any factor in $P$ that also occurs in $T$ as a \emph{match}.
A match $w$ in $P$ can be defined as a pair $(i, \ell)$ such that $w = P[i..i+\ell)$.
We say that $w$ is \emph{maximal} if the match can not be extended neither on the left nor on the right, i.e. either $i = 0$ or $P[i-1..i+\ell)$ does not occur in $T$ and either $i = m - \ell$ or $P[i..i+\ell+1)$ does not occur in $T$. 

\begin{definition}
Given a text $T$ and a pattern $P$, a Maximal Unique Match (\MUM) is a maximal match that occurs exactly once in $T$ and $P$.
\end{definition}

\begin{example}
Let $T$ =	ACACTCTTAC{\bf{ACC}}ATATCATCAA\$ be the text and $P$ = AACCTAA the pattern.
The factor {\bf{AA}} is maximal in $P$ and occurs only once in $T$, while it is repeated in $P$ at positions $0$ and $5$.
The factor {\bf{CT}} of $P$ starting in position 3 is a maximal match that occurs only once in $P$, but it is not unique in $T$.
The factor {\bf{CC}} of $P$ starting in position 2 is unique in both $T$ and $P$, but both can be extended on the left with an {\bf{A}}.
On the other hand, the factor $P[1..4)= T[10..13) = ${\bf{ACC}} is a \MUM. 
\end{example}

From now on, we refer to the set of all maximal unique matches between $T$ and $P$ as $\MUMs$.
In \cite{DBLP:conf/dcc/BoucherGIKLMNP021} the authors showed how to compute maximal matches (not necessarily unique neither in $T$ nor $P$) in $\mathcal{O}(r+g)$ space, where $r$ is the number of runs of the \BWT of $T$ and $g$ is the size of the \SLP representing the text $T$. 
This is achieved by computing the \emph{matching statistics}, for which we report the definition given in \cite{DBLP:conf/dcc/BoucherGIKLMNP021}.

\begin{definition}[\cite{DBLP:conf/dcc/BoucherGIKLMNP021}]
\label{def:ms}
The matching statistics \MS of a pattern $P[0..m)$ with respect to a text $T[0..n)$ is an array of (position, length)-pairs $\MS[0..m)$ such that
\begin{itemize}
    \item $P[i..i + \MS[i].\Len) = T[\MS[i].\Pos..\MS[i].\Pos + \MS[i].\Len )$;
    \item either $i = m-\MS[i].\Len$ or $P[i..i + \MS[i].\Len +1)$ does not occur in $T$.
\end{itemize}
That is, $\MS[i].\Pos$ is the starting position in $T$ of an occurrence of the longest prefix of $P[i..m)$ that occurs in $T$, and $\MS[i].\Len$ is its length.
\end{definition}

A known property of the matching statistics is that for all $i > 0$, $\MS[i].\Len \geq \MS[i-1].\Len -1$.

Our objective is to show how to further compute \MUMs\ within the same space bound. 
For our purpose, we extend the definition of $\MS$ array with an additional information field to each entry.

\begin{definition}\label{def:ems}
 Given a text $T =[0 \ldots n)$ and a pattern $P =[0 \ldots m)$, we define the extended matching statistics $\eMS$ as an array of ($\Pos, \Len, \Twice$)-tuples $\eMS[0 \ldots m)$ such that
\begin{itemize}
    \item $\eMS[i].\Pos = \MS[i].\Pos$ and $\eMS[i].\Len = \MS[i].\Len$;
    \item $\eMS[i].\Twice$ is the largest value $\ell$ for which there exists $p \neq \eMS[i].\Pos$ such that $P[i..i+\ell) = T[p ..p +\ell)$.
\end{itemize}
In other words, $\eMS[i].\Twice$ is the length of the second longest match of a prefix $P[i..n)$ in $T$. 
\end{definition}
 
Note that $\eMS[i].\Twice \leq \eMS[i].\Len$, for any $i\in [0..m)$.

\subsection{Checking Maximality and Uniqueness of matches}

We now show how to compute $\MUMs$ by using the $\eMS$ array.
Lemma~\ref{le:unique_in_T} shows how to verify if a match occurs only once in $T$.

\begin{lemma}
\label{le:unique_in_T}
Given a text $T$, a pattern $P$, and the $\eMS$ array computed for $P$ with respect to $T$, let $w = P[i..i+\eMS[i].\Len) = T[\eMS[i].\Pos..\eMS[i].\Pos+\eMS[i].\Len)$ be a maximal match between a pattern $P[0..m)$ and a text $T[0..n)\$$.
Then $w$ occurs exactly once in $T$ if and only if $\eMS[i].\Twice < \eMS[i].\Len$.
\end{lemma}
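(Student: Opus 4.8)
The plan is to prove the biconditional by directly unwinding Definition~\ref{def:ems} of $\eMS[i].\Twice$, so that the statement reduces to an elementary fact about the occurrences of $w$ in $T$. Throughout I would write $\ell = \eMS[i].\Len$ and $p_0 = \eMS[i].\Pos$, so that by hypothesis $w = P[i..i+\ell) = T[p_0..p_0+\ell)$; in particular $p_0$ witnesses that $w$ occurs \emph{at least} once in $T$. The first thing I would record is that the set $L = \{\ell' : \exists\, p \neq p_0 \text{ with } P[i..i+\ell') = T[p..p+\ell')\}$ is downward closed: if some $p \neq p_0$ matches a prefix of $P[i..m)$ of length $\ell'$, then that same $p$ matches every shorter prefix. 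Hence $L = [0..\eMS[i].\Twice]$ with $\eMS[i].\Twice = \max L$, and since we already noted that $\eMS[i].\Twice \leq \eMS[i].\Len = \ell$, the inequality $\eMS[i].\Twice < \ell$ is equivalent to $\ell \notin L$.

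Given this reformulation, I would establish both directions at once through the chain of equivalences: ``$w$ occurs exactly once in $T$'' $\iff$ ``$p_0$ is the only position $p$ with $T[p..p+\ell) = w$'' $\iff$ ``there is no $p \neq p_0$ with $T[p..p+\ell) = P[i..i+\ell)$'' $\iff$ ``$\ell \notin L$'' $\iff$ ``$\eMS[i].\Twice < \ell$''. The first equivalence is just the meaning of occurring exactly once, using that $w$ is present at $p_0$; the middle two restate the definitions of $w$ and of $L$; and the last is the reformulation from the previous paragraph. For concreteness I would still spell out the contrapositive of the $(\Leftarrow)$ direction: assuming $w$ occurs at a second position $p \neq p_0$, this $p$ realizes a match of length exactly $\ell$, so $\ell \in L$ and $\eMS[i].\Twice \geq \ell$, contradicting $\eMS[i].\Twice < \ell$.

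Since the argument is essentially a definition-chase, the difficulty lies in the bookkeeping rather than in any deep step. The main point I expect to handle carefully is the role of $p_0$: Definition~\ref{def:ems} excludes precisely the position $p_0$, so ``exactly once'' must be read as ``no occurrence other than $p_0$,'' and I would check that every occurrence of $w$ distinct from $p_0$ genuinely falls under the existential quantifier defining $\eMS[i].\Twice$. I would also note two things in passing: first, the maximality hypothesis on $w$ is not actually needed here, since uniqueness in $T$ depends only on $\ell = \eMS[i].\Len$ and on $\eMS[i].\Twice$, yet it is carried along because $w$ is the match relevant to \MUM\ detection; and second, the degenerate case $\ell = 0$ (where $w = \varepsilon$ occurs at every position and $\eMS[i].\Twice = 0 = \ell$) is already consistent with the statement, so it requires no separate treatment.
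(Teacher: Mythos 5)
Your proposal is correct and takes essentially the same route as the paper's own proof: both are direct unwindings of Definition~\ref{def:ems}, and your chain of equivalences via the downward-closed set $L$ (together with the bound $\eMS[i].\Twice \leq \eMS[i].\Len$) is just a cleaner packaging of the paper's two arguments by contradiction. Your side observations---that the maximality hypothesis on $w$ is never used, and that the degenerate case $\ell = 0$ needs no separate treatment---are accurate and apply equally to the paper's version.
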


\begin{proof}
For the \emph{if} direction, we assume by contradiction that $w$ is unique in $T$ and that $\eMS[i].\Twice \geq \eMS[i].\Len$. By definition, $\eMS[i].\Twice \leq \eMS[i].\Len$, hence we assume $\eMS[i].\Twice = \eMS[i].\Len$. By definition of $\eMS[i].\Twice$ there exists $p \neq \eMS[i].\Pos$ such that $w = P[i..i+\eMS[i].\Twice) = T[p..p+\eMS[i].\Twice) = T[\eMS[i].\Pos..\eMS[i].\Pos+\eMS[i].\Len)$, that contradicts the assumption that $w$ occurs only once in the text $T$.
Analogously, assume that $\eMS[i].\Twice < \eMS[i].\Len$ and that there exists a position $j \neq \eMS[i].\Pos$ such that $T[j..j+\eMS[i].\Len) = T[\eMS[i].\Pos..\eMS[i].\Pos+\eMS[i].\Len)$. However, this is in contradiction with the definition of $\eMS[i].\Twice$ and the assumption of $\eMS[i].\Twice < \eMS[i].\Len$, concluding the proof.
\end{proof}

We check the maximality of a match in the pattern using an analogous approach as in \cite{Rossi2022-ah}, that we summarize with the following lemma. 

\begin{lemma}
\label{le:maximality}
Given a text $T$, a pattern $P$, and the $\eMS$ array computed for $P$ with respect to $T$, let $w = P[i..i+\eMS[i].\Len)$ be a match with a text $T$.
Then $w$ is a maximal match if and only if either $i=0$ or $\eMS[i-1].\Len \leq \eMS[i].\Len$.
\end{lemma}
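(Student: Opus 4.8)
The plan is to reduce maximality to a single left-side condition, since right-extension is ruled out automatically by the choice of length. First I would observe that $w = P[i..i+\eMS[i].\Len)$ uses $\eMS[i].\Len = \MS[i].\Len$, which by Definition~\ref{def:ms} is the length of the longest prefix of $P[i..m)$ occurring in $T$. The second bullet of that definition states precisely that either $i = m - \MS[i].\Len$ or $P[i..i+\MS[i].\Len+1)$ does not occur in $T$, which is exactly the right-maximality requirement from the definition of a maximal match. Hence right-maximality holds unconditionally, and $w$ is maximal if and only if it is left-maximal, i.e.\ if and only if $i = 0$ or $P[i-1..i+\MS[i].\Len)$ does not occur in $T$.

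It then remains, for $i > 0$, to characterize when the left-extended factor $P[i-1..i+\MS[i].\Len)$ occurs in $T$ and match this against the inequality $\eMS[i-1].\Len \le \eMS[i].\Len$. Writing $\ell_i := \MS[i].\Len$ and $\ell_{i-1} := \MS[i-1].\Len$, I would note that this factor starts at position $i-1$ and has length $\ell_i + 1$, so it equals $P[i-1..(i-1)+(\ell_i+1))$. The key fact is monotonicity of occurrence for a fixed starting index: since $P[i-1..(i-1)+\ell_{i-1})$ is the longest prefix of $P[i-1..m)$ occurring in $T$, every shorter prefix is also a substring of $T$, while no longer prefix can be (otherwise $\ell_{i-1}$ would not be maximal). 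Applying this with length $\ell_i + 1$ shows that $P[i-1..i+\ell_i)$ occurs in $T$ if and only if $\ell_i + 1 \le \ell_{i-1}$, which over the integers is equivalent to $\ell_{i-1} > \ell_i$.

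Finally I would negate this: the left-extended factor fails to occur in $T$ exactly when $\ell_{i-1} \le \ell_i$, that is $\eMS[i-1].\Len \le \eMS[i].\Len$. Combined with the automatic right-maximality and the trivial case $i = 0$, this yields the claimed equivalence. As a sanity check, I would invoke the known property $\MS[i].\Len \ge \MS[i-1].\Len - 1$, which gives $\ell_{i-1} \le \ell_i + 1$; hence exactly one of the two cases applies, either $\ell_{i-1} = \ell_i + 1$ (left-extendable) or $\ell_{i-1} \le \ell_i$ (left-maximal), confirming the dichotomy is clean.

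I expect the only real subtlety to be the monotonicity step, namely the careful justification that every prefix of a substring of $T$ is again a substring of $T$ while no longer prefix survives once the maximal matching length at $i-1$ is fixed; everything else is a direct reading of Definition~\ref{def:ms}. I would be particularly careful to phrase this monotonicity in terms of the fixed starting index $i-1$ in $P$, and not to conflate it with $\MS[i-1].\Pos$, which records an occurrence position in $T$ rather than an index in $P$.
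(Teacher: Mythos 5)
Your proof is correct and takes essentially the same route as the paper's: right-maximality comes for free from the second condition in Definition~\ref{def:ms}, so maximality reduces to whether $P[i-1..i+\eMS[i].\Len)$ occurs in $T$, which is decided by comparing $\eMS[i-1].\Len$ with $\eMS[i].\Len$. Your monotonicity formulation (the prefixes of $P[i-1..m)$ occurring in $T$ are exactly those of length at most $\eMS[i-1].\Len$) is a cleaner packaging of the paper's two contradiction arguments, with the minor advantage of not needing the property $\MS[i].\Len \geq \MS[i-1].\Len - 1$ except as a sanity check.
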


\begin{proof}
First we show that if $w = P[i..i+\eMS[i].\Len)$ is a maximal match then either $i=0$ or $\eMS[i-1].\Len \leq \eMS[i].\Len$. Let us assume that $w$ is not maximal and either $i=0$ or $\eMS[i-1].\Len \leq \eMS[i].\Len$, hence either $P[i..i+\eMS[i].\Len + 1)$ occurs in $T$ or $P[i-1..i+\eMS[i].\Len)$ occurs in $T$. The former case is in contradiction with the definition of \eMS, hence $P[i-1..i+\eMS[i].\Len)$ occurs in $T$. This implies that $i > 0$ and that $\eMS[i-1].\Len = \eMS[i].\Len + 1$ in contradiction with the hypothesis that $\eMS[i-1].\Len \leq \eMS[i].\Len$. 

Now we show that if either $i=0$ or $\eMS[i-1].\Len \leq \eMS[i].\Len$ then $w$ is a maximal match. By definition of $\eMS[i].\Len$, we know that either $i+\eMS[i].\Len = m$ or $P[i..i+\eMS[i].\Len+1)$ does not occur in $T\$$, that is $w$ cannot be extended on the right in $P$.
If $i = 0$ we can not further extend the match $w$ on the left, hence $w$ is maximal.
If $i>0$, then by definition of matching statistics it holds that $\eMS[i-1].\Len \leq \eMS[i].\Len + 1$. 
Note that if there exists a character $a \in \Sigma$ such that $P[i-1..i-1+\eMS[i-1].\Len) = aw$ and $aw$ occurs in $T$, then $\eMS[i-1] = \eMS[i]+1$. Hence if $\eMS[i-1] = \eMS[i]+1$ then it is easy to see that $w$ is not maximal because it can be extended on the left. It also follows that if $\eMS[i-1] \leq \eMS[i]$ then $w$ cannot be extended on the left, hence it is maximal and the thesis follows.
\end{proof}

Let $\Lh \subseteq[0..m)$ be the subset of positions in $P$ such that both Lemma \ref{le:unique_in_T} and Lemma \ref{le:maximality} hold, i.e. $\Lh$ contains all the positions in $P$ where a maximal match unique in $T$ starts.
One can notice that if a match $w_i = P[i..i+\eMS[i].\Len)$ is a \MUM, then $i \in \Lh$.

We first show that given $i \in \mathcal{L}$, if a match $w_i$ is not unique in $P$, then the second occurrence of $w_i$ in $P$ is contained in another maximal match unique in $T$.

\begin{lemma}
\label{lemma:twice in P}
Given a text $T$, a pattern $P$, and the $\eMS$ array computed for $P$ with respect to $T$, let $\mathcal{L}$ be the subset of positions in $P$ such that $w_i = P[i..i+ \eMS[i].\Len)$ is maximal and occurs only once in $T$ for all $i \in \mathcal{L}$.
Then, $w_i$ is not unique in $P$ if and only if there exist $i'\in \mathcal{L}\setminus\{i\}$ and two possibly empty strings $u,v$ such that $w_{i'} = u w_i v$ is a factor of $P$.
\end{lemma}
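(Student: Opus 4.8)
The plan is to prove the two implications separately, dispatching the reverse direction quickly and spending the bulk of the effort on the forward one.

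For the reverse direction I would assume we are given $i' \in \mathcal{L} \setminus \{i\}$ and strings $u,v$ with $w_{i'} = u w_i v$. Since $w_{i'} = P[i'..i'+\eMS[i'].\Len)$ occurs in $P$ at position $i'$, the string $w_i$ occurs inside it at position $i' + |u|$ of $P$, and it suffices to show this position differs from $i$: that already exhibits a second occurrence of $w_i$ in $P$. If $u = \varepsilon$ this is just $i' \neq i$. If $u \neq \varepsilon$ and $i' + |u|$ were equal to $i$, then the left extension $P[i-1..i+\eMS[i].\Len)$ would sit inside the occurrence of $w_{i'}$ and hence occur in $T$, contradicting the maximality of $w_i$ (which holds since $i \in \mathcal{L}$, by Lemma~\ref{le:maximality}).

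For the forward direction, suppose $w_i$ is not unique in $P$, so there is some $j \neq i$ with $P[j..j+\eMS[i].\Len) = w_i$. As this is an occurrence of a string that occurs in $T$, we have $\eMS[j].\Len \geq \eMS[i].\Len$. The idea is to enclose this second occurrence inside a maximal match. I would take $i'$ to be the largest position $\leq j$ that starts a maximal match (position $0$ always qualifies by Lemma~\ref{le:maximality}, so $i'$ exists). For every $k$ with $i' < k \leq j$, position $k$ does not start a maximal match, so Lemma~\ref{le:maximality} together with the matching-statistics inequality $\eMS[k-1].\Len \leq \eMS[k].\Len + 1$ forces $\eMS[k-1].\Len = \eMS[k].\Len + 1$. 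Telescoping this chain yields $w_{i'} = P[i'..j+\eMS[j].\Len)$, a factor of $P$ that contains $P[j..j+\eMS[i].\Len) = w_i$ at offset $j - i'$; setting $u = P[i'..j)$ and $v = P[j+\eMS[i].\Len..j+\eMS[j].\Len)$ gives $w_{i'} = u w_i v$.

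It remains to verify $i' \in \mathcal{L}\setminus\{i\}$, which is where the main obstacle lies. Maximality of $w_{i'}$ is immediate from the choice of $i'$. For uniqueness in $T$ I would argue that a superstring of a $T$-unique string is itself $T$-unique: if $w_{i'}$ occurred twice in $T$, then $w_i$, appearing at the fixed offset $|u|$ inside $w_{i'}$, would occur twice in $T$, contradicting $\eMS[i].\Twice < \eMS[i].\Len$ (Lemma~\ref{le:unique_in_T}); hence $i' \in \mathcal{L}$. Finally $i' \neq i$: if $j < i$ then $i' \leq j < i$, and if $j > i$ then $i' = i$ would make the right endpoints coincide, giving $\eMS[j].\Len = \eMS[i].\Len - (j-i) < \eMS[i].\Len$, contradicting $\eMS[j].\Len \geq \eMS[i].\Len$. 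I expect the uniqueness-propagation step and the endpoint bookkeeping in the telescoping to be the most delicate parts, while everything else follows directly from the two preceding lemmas.
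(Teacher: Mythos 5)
Your proof is correct and takes essentially the same approach as the paper's: the backward direction coincides with the paper's (the occurrence of $w_i$ inside $w_{i'}$ must sit at a position other than $i$, since otherwise maximality of $w_i$ is violated), and your forward direction constructs explicitly---via the nearest maximal-match start $i' \leq j$, the telescoping identity $\eMS[k-1].\Len = \eMS[k].\Len + 1$, and the propagation of $T$-uniqueness from $w_i$ to its superstring $w_{i'}$---exactly the match $w_k$ containing $w_i$ whose existence the paper asserts in its (contrapositive) argument. The only difference is presentational: the paper reasons by contradiction and leaves the leftward-extension bookkeeping implicit, which you spell out.
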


\begin{proof}
Let us assume by contradiction that such $i'$ does not exist, then let $j \notin \mathcal{L}$ be such that $P[j..j+|w_i|) = w_i$. Since $j \notin \mathcal{L}$ then either $P[j..j+|w_i|)$ is not unique in $T$, or it is not maximal. The former case it contradicts $i \in \mathcal{L}$ because $P[j..j+|w_i|) = w_i$ occurs twice in $T$. Hence, $P[j..j+|w_i|)$ occurs only once in $T$ and it is not maximal, therefore there exists $k \in \mathcal{L}$ such that $k \leq j$ and $|w_k| > |w_i|$ which contradict the hypothesis. The other direction of the proof is straightforward since by definition of $w_{i'}$, either $w_i$ occurs twice in $P$ or it is not maximal. 
\end{proof}

The following Lemma shows, for any $i \in \Lh$, if a match $w_i$ is unique in $P$ by using the $\eMS$ array.

\begin{lemma}
\label{le:unique_in_P}
Given a text $T$, a pattern $P$, and the $\eMS$ array computed for $P$ with respect to $T$, let $\mathcal{L}$ be the subset of positions in $P$ such that $w_i = P[i..i+ \eMS[i].\Len)$ is maximal and occurs only once in $T$, for all $i \in \mathcal{L}$.
Then, $w_i$ occurs only once in $P$ if and only if, for all $i' \in \mathcal{L} \setminus \{i\}$, either $\eMS[i].\Pos < \eMS[i'].\Pos$ or $\eMS[i].\Len + \eMS[i].\Pos > \eMS[i'].\Len + \eMS[i'].\Pos$.
\end{lemma}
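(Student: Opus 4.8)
The plan is to reduce the statement to Lemma~\ref{lemma:twice in P} and then translate the factor relation in $P$ into a nesting relation between occurrence intervals in $T$. By Lemma~\ref{lemma:twice in P}, the match $w_i$ fails to be unique in $P$ precisely when there is some $i' \in \mathcal{L} \setminus \{i\}$ and (possibly empty) strings $u,v$ with $w_{i'} = u\,w_i\,v$. Since $w_{i'} = P[i'..i'+\eMS[i'].\Len)$ is already a factor of $P$ by definition, this condition is equivalent to saying that the string $w_i$ is a factor of the string $w_{i'}$. Hence it suffices to characterise, using only the $\eMS$ array, when $w_i$ is a factor of some $w_{i'}$ with $i' \in \mathcal{L}\setminus\{i\}$; negating this characterisation will then produce the claimed ``for all $i'$'' disjunction.

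The heart of the argument is the following claim: for $i,i' \in \mathcal{L}$, the string $w_i$ is a factor of $w_{i'}$ if and only if the occurrence of $w_i$ in $T$ is nested inside that of $w_{i'}$, that is,
\[
\eMS[i'].\Pos \le \eMS[i].\Pos \quad\text{and}\quad \eMS[i].\Pos + \eMS[i].\Len \le \eMS[i'].\Pos + \eMS[i'].\Len .
\]
For the forward direction I would exploit that both matches are \emph{unique in $T$}, as $i,i' \in \mathcal{L}$: writing $w_{i'} = u\,w_i\,v$, the unique occurrence of $w_{i'}$ at position $\eMS[i'].\Pos$ exhibits a copy of $w_i$ starting at $\eMS[i'].\Pos + |u|$, and by uniqueness of $w_i$ in $T$ this copy must coincide with the occurrence recorded by $\eMS$, forcing $\eMS[i].\Pos = \eMS[i'].\Pos + |u|$. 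Combining this with $|w_{i'}| = |u| + |w_i| + |v|$ together with $\eMS[i].\Len = |w_i|$ and $\eMS[i'].\Len = |w_{i'}|$ yields both inequalities. The reverse direction is immediate: if the two $T$-intervals nest, then $w_i = T[\eMS[i].\Pos..\eMS[i].\Pos+\eMS[i].\Len)$ is literally a substring of $w_{i'} = T[\eMS[i'].\Pos..\eMS[i'].\Pos+\eMS[i'].\Len)$, so $w_{i'} = u\,w_i\,v$ for suitable $u,v$.

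To finish, I would negate the nesting condition via De Morgan: $w_i$ is \emph{not} a factor of $w_{i'}$ exactly when $\eMS[i].\Pos < \eMS[i'].\Pos$ or $\eMS[i].\Len + \eMS[i].\Pos > \eMS[i'].\Len + \eMS[i'].\Pos$. Quantifying over all $i' \in \mathcal{L}\setminus\{i\}$ and applying Lemma~\ref{lemma:twice in P} then shows that $w_i$ is unique in $P$ iff this disjunction holds for every such $i'$, which is exactly the statement. I expect the only delicate point to be the forward direction of the central claim, where uniqueness in $T$ is essential to pin down that the copy of $w_i$ sitting inside the $T$-occurrence of $w_{i'}$ is the very occurrence at $\eMS[i].\Pos$; without uniqueness the equality $\eMS[i].\Pos = \eMS[i'].\Pos + |u|$ could fail and the interval-nesting equivalence would break. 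The degenerate cases (empty $u$ or $v$, or $w_i = w_{i'}$) are routine and consistent with the stated inequalities.
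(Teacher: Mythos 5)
Your proof is correct and follows essentially the same route as the paper's: both reduce uniqueness in $P$ to Lemma~\ref{lemma:twice in P} and then use uniqueness of the matches in $T$ to turn the factor relation $w_{i'} = u\,w_i\,v$ into the nesting of occurrence intervals $\eMS[i'].\Pos \le \eMS[i].\Pos$ and $\eMS[i].\Pos + \eMS[i].\Len \le \eMS[i'].\Pos + \eMS[i'].\Len$, whose negation is the stated disjunction. If anything, your isolation of the ``factor iff nested intervals'' equivalence is organized more cleanly than the paper's own two-paragraph argument, which blurs the two directions of the implication.
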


\begin{proof}
We first show that if $w_i$ occurs only once in $P$ then for all $i' \in \mathcal{L} \setminus \{i\}$, either $\eMS[i].\Pos < \eMS[i'].\Pos$ or $\eMS[i].\Len + \eMS[i].\Pos > \eMS[i'].\Len + \eMS[i'].\Pos$.
Since $\mathcal{L}$ contains only positions of maximal matches unique in $T$, then for all for $i \in \mathcal{L}$ we can map $w_i$ to its occurrence in the text $T[\eMS[i].\Pos..\eMS[i].\Pos + \eMS[i].\Len)$.
Since $w_i$ occurs only once in $T$, by Lemma~\ref{lemma:twice in P} we have that $\eMS[i'].\Pos = \eMS[i].\Pos -|u|$ and $\eMS[i'].\Len = \eMS[i].\Len +|u| +|v|$.
Hence, $\eMS[i'].\Pos \leq \eMS[i].\Pos$ and $\eMS[i].\Pos + \eMS[i].\Len \leq \eMS[i'].\Pos + \eMS[i'].\Len$.

We now show the other direction of the implication. If given a position $i \in \mathcal{L}$ for all $i' \in \mathcal{L} \setminus \{i\}$, either $\eMS[i].\Pos < \eMS[i'].\Pos$ or $\eMS[i].\Len + \eMS[i].\Pos > \eMS[i'].\Len + \eMS[i'].\Pos$ then $w_i$ occurs only once in $P$. Assuming by contradiction that there exists a position $i \in \mathcal{L}$ such that for all $i' \in \mathcal{L} \setminus \{i\}$, either $\eMS[i].\Pos < \eMS[i'].\Pos$ or $\eMS[i].\Len + \eMS[i].\Pos > \eMS[i'].\Len + \eMS[i'].\Pos$ and $w_i$ does not occur only once in $P$, then by Lemma`\ref{lemma:twice in P} there exist $j\in \mathcal{L}$ and two possibly empty strings $u,v$ such that $w_{j} = u w_i v$ is a factor of $P$. It is easy to see that $\eMS[j].\Pos = \eMS[i].\Pos -|u|$ and $\eMS[j].\Len = \eMS[i].\Len +|u| +|v|$.
Hence, $\eMS[j].\Pos \leq \eMS[i].\Pos$ and $\eMS[i].\Pos + \eMS[i].\Len \leq \eMS[j].\Pos + \eMS[j].\Len$, in contradiction with the hypothesis, concluding the proof.
\end{proof}

We can summarize the previous Lemmas in the following Theorem.

\begin{theorem}
\label{th:mum}
Given a text $T$, a pattern $P$, and the $\eMS$ array computed for $P$ with respect to $T$, for all $0 \leq i < m$, $w_i = P[i..i+\eMS[i].\Len)$ is a \MUM if and only if $i \in \Lh$ and Lemma \ref{le:unique_in_P} holds.
\end{theorem}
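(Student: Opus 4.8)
The plan is to unfold the definition of a \MUM into its three defining requirements and match each of them, one by one, to a characterization already established in the preceding lemmas. By definition, $w_i$ is a \MUM exactly when it is (a) a maximal match, (b) occurs exactly once in $T$, and (c) occurs exactly once in $P$. First I would note that, by the definition of $\eMS$ (whose $\Pos$ and $\Len$ fields are inherited from the matching statistics), $w_i = P[i..i+\eMS[i].\Len)$ is always a genuine match occurring in $T$ at position $\eMS[i].\Pos$, so there is no well-definedness issue to address before the argument begins.

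Next I would handle conditions (a) and (b) jointly. Recall that $\Lh$ is defined as the set of positions at which both Lemma~\ref{le:maximality} and Lemma~\ref{le:unique_in_T} hold; equivalently, $i \in \Lh$ if and only if $w_i$ is a maximal match that occurs exactly once in $T$. Thus membership $i \in \Lh$ is precisely the conjunction of (a) and (b).

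It then remains to treat (c). Here I would observe that $\Lh$ coincides with the set $\mathcal{L}$ appearing in the hypotheses of Lemma~\ref{le:unique_in_P}, so that whenever $i \in \Lh$ the hypotheses of that lemma are satisfied. Lemma~\ref{le:unique_in_P} then states exactly that, for such an $i$, the position--length condition it gives holds if and only if $w_i$ occurs only once in $P$; this is condition (c). Conjoining the two characterizations yields: $w_i$ is a \MUM if and only if $i \in \Lh$ (conditions (a) and (b)) together with the condition of Lemma~\ref{le:unique_in_P} (condition (c)), as claimed. Both directions of the final equivalence follow at once, since the conjunction of the conditions is equivalent, term by term, to the conjunction of the three \MUM requirements.

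The proof is essentially a bookkeeping combination of the established lemmas, so I do not expect a genuine obstacle. The only point requiring care is to verify that the hypotheses of Lemma~\ref{le:unique_in_P}---namely that $i$ lie in the set of maximal, $T$-unique match positions---are exactly what $i \in \Lh$ provides, so that the biconditional of that lemma may legitimately be invoked for each $i \in \Lh$; once this identification is confirmed, the equivalence is immediate.
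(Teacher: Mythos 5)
Your proposal is correct and matches the paper's approach: the paper gives no separate proof of Theorem~\ref{th:mum}, presenting it explicitly as a summary of the preceding lemmas, which is exactly the bookkeeping combination you carry out. Your identification of $\Lh$ with the conjunction of maximality (Lemma~\ref{le:maximality}) and uniqueness in $T$ (Lemma~\ref{le:unique_in_T}), and of the remaining condition with the biconditional of Lemma~\ref{le:unique_in_P}, is precisely the intended argument.
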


\begin{example}
Let $T$ =	ACACTCTTAC{\bf{ACC}}ATATCATCAA\$ be the text and $P$ = AACCTAA the pattern.
In the table below we report the values of the \eMS of $P$ with respect to $T$. 

 \begin{center}
   	\settowidth\mylen{$10$}
    $
    \begin{array}{{r}|*{7}{C}}
    i & 0 & 1 & 2 & 3 & 4 & 5 & 6 \\
    \hline
    P[i] & \text{A} & \text{A} & \text{C} & \text{C} & \text{T} & \text{A} & \text{A} \\
    \eMS[i].\Pos & 21 & 10 & 11 & 5 & 6 & 21 & 8 \\
    \eMS[i].\Len & 2 & 3 & 2 & 2 & 2 & 2 & 1 \\
    \eMS[i].\Twice & 1 & 2 & 1 & 2 & 2 & 1 & 1 
    \end{array}
    $
  \end{center}

  It is easy to check that $\Lh = \{0, 1, 5\}$, where $\Lh$ contains those indices $i$ which verify both Lemma~\ref{le:unique_in_T} $(\eMS[i].\Twice < \eMS[i].\Len$) and Lemma~\ref{le:maximality} (either $i = 0$ or $\eMS[i-1].\Len \leq \eMS[i].\Len$).
  Note that $\eMS[0].\Pos = \eMS[5].\Pos$ and $\eMS[0].\Len = \eMS[5].\Len$, and by Lemma~\ref{le:unique_in_P} we know that $P[0..2)$($=P[5..7)$) is repeated in $P$.
  Since $\eMS[1].\Pos < \eMS[0].\Pos = \eMS[5].\Pos$, by Theorem~\ref{th:mum} the match $P[1..4) = T[10..13) = $ ACC is a MUM.
\end{example}

\subsection{Computing the second longest match}

Now we show how we can compute $\eMS$ extending the algorithm presented in Boucher et al.~\cite{DBLP:conf/dcc/BoucherGIKLMNP021} while preserving the same space-bound.

We can apply verbatim the algorithm of~\cite{DBLP:conf/dcc/BoucherGIKLMNP021} to compute the $\eMS[i].\Pos$ and $\eMS[i].\Len$ while we extend the algorithm to include the computation of $\eMS[i].\Slen$.
The following Lemma shows how to find the second longest match using the $\LCP$ array.

\begin{lemma}
\label{le:twice}
Given a text $T$, a pattern $P$, and the $\eMS$ array of $P$ with respect to $T$, let $P[i..i + \eMS[i].\Len) = T[\eMS[i].\Pos..\eMS[i].\Pos + \eMS[i].\Len)$ and $q = \ISA[\eMS[i].\Pos]$.
Then, for all $0\leq q < n$, $\eMS[i].\Twice = \max\{\LCP[q], \LCP[q +1]\}$,  where $\LCP[n]=0$.
\end{lemma}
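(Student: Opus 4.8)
The plan is to connect the combinatorial definition of $\eMS[i].\Twice$ — the length of the second-longest match of $P[i..m)$ in $T$ — to the local structure of the $\LCP$ array around the position $q = \ISA[\eMS[i].\Pos]$, which is the rank in suffix-array order of the occurrence of the primary match. First I would observe that the suffixes of $T$ are sorted lexicographically in $\SA$ order, so all suffixes sharing a long common prefix with $T[\eMS[i].\Pos..n)$ form a contiguous block of rows around row $q$. The key structural fact I would invoke is that for any row $q$, the longest common prefix between $T[\SA[q]..n)$ and \emph{any other} suffix of $T$ is maximized by one of its two lexicographic neighbours: precisely $\max\{\LCP[q],\LCP[q+1]\}$, since $\LCP[q] = lcp(T[\SA[q-1]..n), T[\SA[q]..n))$ and $\LCP[q+1] = lcp(T[\SA[q]..n), T[\SA[q+1]..n))$, and $lcp$ values cannot increase as one moves away from $q$ in either direction through sorted suffixes.

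Next I would tie this $\LCP$ quantity to $\eMS[i].\Twice$ itself. Write $\ell = \eMS[i].\Len$ and $p = \eMS[i].\Pos$, so that the prefix $P[i..i+\ell)$ matches $T[p..p+\ell)$ but $P[i..i+\ell+1)$ does not occur in $T$ (or $i+\ell = m$). By Definition~\ref{def:ems}, $\eMS[i].\Twice$ is the largest $\ell'$ such that $P[i..i+\ell')$ occurs at some position $p' \neq p$ in $T$. I would argue both inequalities. For $\eMS[i].\Twice \le \max\{\LCP[q],\LCP[q+1]\}$: any second occurrence $T[p'..p'+\ell')$ of $P[i..i+\ell')$ must also agree with $T[p..p+\ell')$ on those first $\ell'$ characters (both equal the same pattern prefix), so the suffix starting at $p'$ shares a prefix of length at least $\ell'$ with $T[p..n)$; since $p' \ne p$ the corresponding row $q' = \ISA[p'] \neq q$, and the largest $lcp$ between row $q$ and any other row is exactly $\max\{\LCP[q],\LCP[q+1]\}$, giving $\ell' \le \max\{\LCP[q],\LCP[q+1]\}$. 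For the reverse inequality, I would take the neighbour (either $q-1$ or $q+1$) achieving the maximum, call its value $L = \max\{\LCP[q],\LCP[q+1]\}$, and note the neighbouring suffix starts at some $p' = \SA[q\pm 1] \neq p$ and satisfies $T[p'..p'+L) = T[p..p+L)$. Since $L \le \ell$ (because at most $\ell$ characters of $T[p..n)$ coincide with the pattern prefix, and we will need $L \le \eMS[i].\Len$ which follows from the note after Definition~\ref{def:ems}), the prefix $P[i..i+L) = T[p..p+L) = T[p'..p'+L)$ occurs at $p' \neq p$, so $\eMS[i].\Twice \ge L$.

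The main obstacle I anticipate is the boundary bookkeeping between $\ell' \le \eMS[i].\Len$ and the $\LCP$ value, i.e. making sure the $\LCP$-derived length is never \emph{longer} than the primary match so that it genuinely corresponds to a pattern prefix. I would handle this by using the established bound $\eMS[i].\Twice \le \eMS[i].\Len$ from the remark following Definition~\ref{def:ems} together with the fact that no suffix of $T$ can share more than $\eMS[i].\Len$ characters with $T[p..n)$ \emph{as a pattern prefix}, since $P[i..i+\eMS[i].\Len+1)$ does not occur in $T$; any neighbour agreeing with row $q$ beyond $\ell$ characters would still only certify a pattern match up to length $\ell$. The remaining detail is the edge case $q = n-1$, where row $q+1$ does not exist; I would use the stated convention $\LCP[n] = 0$ to make the formula uniform, and symmetrically note $\LCP[0] = 0$ covers $q = 0$, so the expression $\max\{\LCP[q], \LCP[q+1]\}$ is well-defined for all $0 \le q < n$.
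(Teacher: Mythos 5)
Your strategy is the same one the paper uses: sort the suffixes of $T$, observe that the second-longest match must be realized by a lexicographic neighbour of $w_q = T[\eMS[i].\Pos..n)$, and read the answer off $\LCP[q]$ and $\LCP[q+1]$. Your upper-bound direction, $\eMS[i].\Twice \le \max\{\LCP[q],\LCP[q+1]\}$, is sound, and is essentially the paper's whole argument made explicit. The gap is in your lower bound, at exactly the spot you yourself flagged as the main obstacle. You need $L = \max\{\LCP[q],\LCP[q+1]\} \le \eMS[i].\Len$ so that the length-$L$ agreement between $T[\eMS[i].\Pos..n)$ and its neighbouring suffix certifies an occurrence of the \emph{pattern} prefix $P[i..i+L)$, and you justify this by appealing to the remark $\eMS[i].\Twice \le \eMS[i].\Len$ following Definition~\ref{def:ems}. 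That is circular: the remark bounds $\eMS[i].\Twice$, and you do not yet know that $L$ equals $\eMS[i].\Twice$ --- that equality is precisely the statement being proved.

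Worse, the inequality $L \le \eMS[i].\Len$ can genuinely fail, so no rewording rescues the step. Let $T$ be \texttt{ABCABC\$} and let the pattern suffix $P[i..m)$ be \texttt{ABD}. Then $\eMS[i].\Len = 2$ (only \texttt{AB} occurs in $T$), say with $\eMS[i].\Pos = 0$; the lexicographic neighbour of $T[0..n)$ is the suffix \texttt{ABC\$} starting at position $3$, so $\max\{\LCP[q],\LCP[q+1]\} = 3$, while $\eMS[i].\Twice = 2$ because no length-$3$ prefix of \texttt{ABD} occurs anywhere in $T$. The identity that actually holds --- and that your two-inequality skeleton does prove once the lower bound is capped at length $\min\{\eMS[i].\Len, L\}$ --- is $\eMS[i].\Twice = \min\{\eMS[i].\Len,\, \max\{\LCP[q],\LCP[q+1]\}\}$. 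In fairness, the paper's own proof is loose in the identical way: it asserts that a suffix beginning with the second-longest match is adjacent to $w_q$ and jumps to equality without addressing the case $L > \eMS[i].\Len$, and the algorithms silently repair this by capping every computed lcp value with the current match length (the $\Min(\Len,\cdot)$ and $\Min(lcp,\cdot)$ operations in Algorithms~\ref{algo:MSmatch} and~\ref{algo:MSMismatch}). So your attempt reproduces the paper's idea and its flaw; the difference is that you tried to justify the critical inequality explicitly, and the justification you gave is circular rather than a proof.
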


\begin{proof}
Let us consider the set $\mathcal{T}=\{w_0 < w_1 < \ldots < w_n\}$ of the lexicographically sorted suffixes of $T$.
Then, for all $i \in [0..m)$, at least one suffix of $T$ starting with the second longest match $P[i..i+\eMS[i].\Twice)$ must be adjacent to $w_q = T[\eMS[i].\Pos..n)$ in $\mathcal{T}$.
Hence, assuming $q\neq 0$ and $q\neq n$, $\eMS[i].\Slen$ is either the $\LCP$ value between $w_{q-1}$ and $w_{q}$ or between $w_{q}$ and $w_{q+1}$, that are respectively $\LCP[q]$ and $\LCP[q+1]$.
Note that if $q = 0$ then both $\LCP[0]$ and $\LCP[1]$ exist, while for the case $q = n$ only $\LCP[n]$ is available, that is $\eMS[i].\Slen$ must be $\LCP[n]$.
\end{proof}

\section{Algorithm description}
In this section we present the algorithm that we use to compute $\MUMs$ that builds on the approach of Boucher et al.~\cite{DBLP:conf/dcc/BoucherGIKLMNP021} for the computation of the $\MS$ array. The authors showed how to use the $r$-index and the SLP of~\cite{BigRePair, Gagie2020-lg} to compute the $\MS$ array of a pattern $P[0..m)$ in $\Oh(m\cdot (t_{\LF} + t_{\LCE} + t_{\pred}))$ time, where $t_{\LF}$, $t_{\LCE}$, and $t_{\pred}$ represent the time to perform respectively one \LF, one $\LCE$, and one predecessor query.
Our algorithm extends Boucher et al.'s method by storing additional $\Oh(r)$ samples of the $\LCP$ array. Given a text $T[0..n)$ and a pattern $P[0..m)$, in the following, we first show how to compute the $\eMS$ array of $P$ with respect to $T$ using the $r$-index, the SLP, and the additional $\LCP$ array samples. Then we show how to apply Theorem~\ref{th:mum} to compute the \MUMs  \ from the $\eMS$ array.

\subsection{Computing the {\eMS} array}\label{ss:streaming_ems}

The key point of the algorithm is to extend the last computed match backwards when possible, otherwise we search for the new longest match that can be extended on the left by using the \BWT.
Let $q$ be the index such that $P[i..i+\eMS[i].\Len) = T[\SA[q]..\SA[q]+\eMS[i].\Len)$ is the longest match found at step $i$:
\begin{itemize}
    \item if $\BWT[q] = P[i-1]$, then it can be extended on the left, i.e. $P[i-1..i+\eMS[i].\Len) = T[\SA[q]-1..\SA[q]+\eMS[i].\Len)$;
    \item otherwise, we want to find the longest prefix of $P[i..i+\eMS[i].\Len)$ that is preceded by $P[i-1]$ in the text $T$. As observed in Bannai et al.~\cite{bannai2020refining} it can be either the suffix corresponding to the occurrence of $P[i-1]$ in the $\BWT$ immediately preceding or immediately following $q$, that we refer to as $q_p$ and $q_s$ respectively. Formally, $q_p = \max \{ j < q \mid \BWT[j] = P[i-1] \}$ and $q_s = \min \{ j > q \mid \BWT[j] = P[i-1] \}$.
\end{itemize} 

The algorithm to compute the $\Pos$ and $\Len$ entry of the $\eMS$ array is analogous to the procedure detailed in \cite{DBLP:conf/dcc/BoucherGIKLMNP021}.
We use the same data structures as the one defined in \cite{DBLP:conf/dcc/BoucherGIKLMNP021}, that are the run-length encoded $\BWT$ and the samples of the $\SA$ in correspondence of positions $q$ such that $\BWT[q]$ is either the first or the last symbol of an equal-letter run of the $\BWT$.
Note that both $q_p$ and $q_s$ are respectively the last and the first index of their corresponding equal-letter run.

An analogous reasoning can be formulated to compute the second longest match.

\begin{lemma}
\label{le:lcp_lce+1}
Given a text $T[0..n)$, let \LCP, \SA and \ISA be respectively the longest common prefix array, suffix array and inverse suffix array of $T$.
Then, for all $0 < q \leq n$, let $i,j$ be two integers such that $q-1 = \LF[i]$ and $q = \LF[j]$, then if $\BWT[i] \neq \BWT[j]$ then $\LCP[q] = 0$, otherwise $\LCP[q] = \LCE(\SA[i], \SA[j]) + 1$.
\end{lemma}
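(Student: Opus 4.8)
The plan is to reduce the statement to the defining property of the \LCP\ array together with the standard characterization of the \LF-mapping. Recall that $\LCP[q]$ is by definition the length of the longest common prefix of the two suffixes of $T$ that are adjacent in lexicographic order at ranks $q-1$ and $q$, namely $T[\SA[q-1]..n)$ and $T[\SA[q]..n)$. Since \LF\ is a permutation of $[0..n)$, for $0<q<n$ there are unique indices $i,j$ with $q-1=\LF(i)$ and $q=\LF(j)$, so the hypotheses pin down exactly these two adjacent suffixes. The first step is therefore to express $\SA[q-1]$ and $\SA[q]$ in terms of $\SA[i]$ and $\SA[j]$.

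To do this I would unfold the definition $\LF(k)=\ISA[\SA[k]-1 \bmod n]$ and use $\SA[\ISA[x]]=x$ to obtain $\SA[\LF(k)]=\SA[k]-1 \bmod n$. Applying this with $k=i$ and $k=j$ yields $\SA[q-1]=\SA[i]-1 \bmod n$ and $\SA[q]=\SA[j]-1 \bmod n$. Consequently the first character of the rank-$(q-1)$ suffix is $T[\SA[i]-1 \bmod n]=\BWT[i]$ and the first character of the rank-$q$ suffix is $T[\SA[j]-1 \bmod n]=\BWT[j]$; the two suffixes are thus obtained by prepending $\BWT[i]$ and $\BWT[j]$, respectively, to $T[\SA[i]..n)$ and $T[\SA[j]..n)$.

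The argument then splits on whether these prepended characters agree. If $\BWT[i]\neq\BWT[j]$, the two adjacent suffixes already disagree in their first character, so their longest common prefix is empty and $\LCP[q]=0$, as claimed. If instead $\BWT[i]=\BWT[j]=c$, the two suffixes share the leading character $c$, and their remaining longest common prefix is exactly that of $T[\SA[i]..n)$ and $T[\SA[j]..n)$, which equals $\LCE(\SA[i],\SA[j])$ by the definition of the \LCE\ query; adding back the one matched character $c$ gives $\LCP[q]=\LCE(\SA[i],\SA[j])+1$.

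The one point that needs care — and which I expect to be the main (if minor) obstacle — is the $\bmod\,n$ wraparound together with the terminator \$. Here I would observe that, since \$ occurs exactly once in $T$, it occurs exactly once in the \BWT, so $\BWT[i]=\BWT[j]$ with $i\neq j$ forces $c\neq{}$\$; this in turn gives $\SA[i]\neq 0$ and $\SA[j]\neq 0$, so each subtraction $\SA[\cdot]-1 \bmod n$ is an ordinary decrement and the rewriting of the two suffixes as $c\cdot T[\SA[i]..n)$ and $c\cdot T[\SA[j]..n)$ is literal rather than cyclic, validating the equal-character computation above. The degenerate index $q=n$ is excluded automatically, since \LF\ never takes the value $n$, which is consistent with the convention $\LCP[n]=0$ adopted in Lemma~\ref{le:twice}.
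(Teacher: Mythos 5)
Your proof is correct and takes essentially the same route as the paper's: both identify the first characters of the lexicographically adjacent suffixes $w_{q-1}$ and $w_q$ with $\BWT[i]$ and $\BWT[j]$, identify their remainders with the suffixes $T[\SA[i]..n)$ and $T[\SA[j]..n)$, and then split on whether those two characters agree. Your version is simply more explicit about the identity $\SA[\LF(k)] = \SA[k]-1 \bmod n$, the $\bmod\ n$ wraparound, and the degenerate index $q=n$, all of which the paper's proof leaves implicit.
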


\begin{proof}
Let $w_q$ be the $q$-th suffix in lexicographic order.
Note that if $w_q = \$$ then $\LCP[q] = \LCP[q+1] = 0$.
For all $1 \leq q < n$, if $w_{q-1} = au\$$ and $w_{q} = bv\$$ for some $a < b \in \Sigma$ and some strings $u$ and $v$, then $\LCP[q] = 0$.
On the other hand, if $w_{q-1} = au\$$ and $w_{q} = av\$$, then $\LCP[q] = 1 + lcp(u\$, v\$)$.
The thesis follows by observing that the suffixes $u\$$ and $v\$$ respectively correspond to $w_{i}$ and $w_{j}$.
\end{proof}

\begin{wrapfigure}[21]{hr}{0.5\textwidth}
\begin{minipage}{\linewidth}
\includegraphics[width=6.3cm]{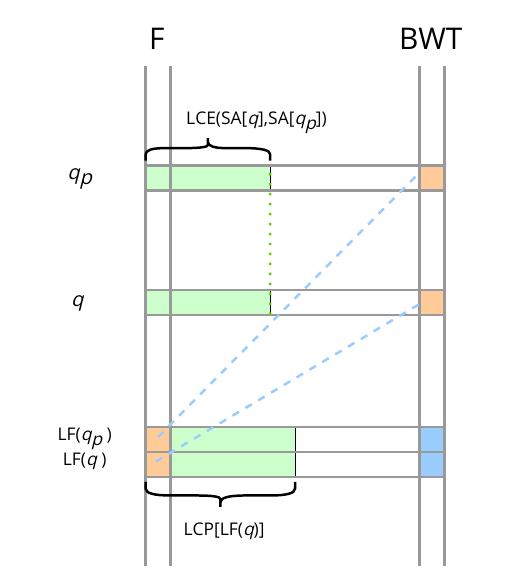}
\end{minipage}
\caption{
Application of Lemma \ref{le:lcp_lce+1} to compute $\LCP[\LF(q)]$ by extending the result of the last $\LCE$ query.  \label{fig:algoMSmatch_ll34}}
\end{wrapfigure}

Note that, the second longest match can be retrieved from the $\LCP$ values in correspondence of the longest maximal match (Lemma \ref{le:twice}).
Once we have the maximal match in position $q$ in the \BWT, we can compute $\LCP[q]$ and $\LCP[q+1]$ from the $\LCE$ queries on $T[\SA[q]..n)$ with $T[\SA[q_p]..n)$ and $T[\SA[q_s]..n)$ (Lemma \ref{le:lcp_lce+1}).

Moreover, assuming the index $q_p$ is the greatest index smaller than $q$ such that $\BWT[q_p] = \BWT[q]$, then $\LF(q_p) = \LF(q)-1$.
It follows that if $\BWT[\LF(q_p)] = \BWT[\LF(q)-1] = \BWT[\LF(q)]$, then $\LCP[\LF(q)])$ is an extension of the $\LCE$ query computed between $\SA[q_p]$ and $\SA[q]$ (see Figure \ref{fig:algoMSmatch_ll34}).
Symmetrically, if $q_s$ is the smallest index greater than $q$ such that $\BWT[q_s] = \BWT[q]$, then $\LF(q_s) = \LF(q) +1$.
Thus, at each iteration, we keep track of both $\LCP$ values computed to find the second longest match.

With respect to the implementation in \cite{DBLP:conf/dcc/BoucherGIKLMNP021}, we add $\Oh(r)$ sampled values from the $\LCP$ array.
Precisely, we store the $\LCP$ values between the first and the last two suffixes in correspondence of each equal-letter run (if only one suffix corresponds to a run we simply store 0).
As shown later, this allows to overcome the problem of computing the \LCE queries in case a position $p$ in $T$ is not stored in the sampled $\SA$, i.e. when $\ISA[p]$ is neither the first nor the last index of its equal-letter run.

For simplicity of exposition we ignore the cases when a \Select query of a symbol $c$ in the $\BWT$ fails.
However, whenever it happens, either $c$ does not occur in $T$ or we are attempting to find an occurrence out of the allowed range, that is between $0$ and the number of occurrences of the character $c$ minus $1$.
For the first case we can simply reset the algorithm starting from the next character of $P$ to process, while the second occurs when we are attempting to compute an $\LCE$ query, whose result can be safely set to $0$.

Algorithm~\ref{algo:MScomputation} computes the extended matching statistics $\eMS$ of the pattern $P = [0 \ldots m)$ with respect to  the text $T = [0 \ldots n)$ starting from the last element of the pattern (line 2).
Moreover, we keep track of the first $\LCP$ values with respect to the maximal match of length $1$ (line \ref{lalgo:1_lcp_initialization}).

At each iteration of the loop (line~\ref{lalgo:1_MScomputation_forLoop}), the algorithm tries to extend the match backwards position by position.
If the match can be extended (line~\ref{lalgo:1_MScomputation_match}), then we use Algorithm~\ref{algo:MSmatch} to compute the entry of the \eMS. Otherwise, we use Algorithm~\ref{algo:MSMismatch} to compute the next entry of \eMS (line \ref{lalgo:1_MScomputation_mismatch}).

\begin{algorithm}[tb]

\DontPrintSemicolon
\LinesNumbered
\Input{Pattern $P[0,m)$}
\Output{Extended matching statistics $\eMS[0..m)$}
\BlankLine
 $q \leftarrow \BWT.\Select_{P[m-1]}(1)$\;
$\eMS[m-1] \leftarrow (\Pos:\SA[q] - 1, \Len:1, \Twice: 1)$ \label{lalgo:1_eMS_inizialization}\;
$lcp_p \leftarrow 0$, $lcp_s \leftarrow 1$ \label{lalgo:1_lcp_initialization}\;
$q \leftarrow \LF(q)$\;
\For{$i\leftarrow m-2$ \Down \KwTo $0$}{ \label{lalgo:1_MScomputation_forLoop}
    \uIf{$\BWT[q] = P[i]$}{
        $\eMS[i], lcp_p, lcp_s \leftarrow$ MSMatch$(P[i], q, \eMS[i+1].\Pos, \eMS[i+1].\Pos, lcp_p, lcp_s)$ \label{lalgo:1_MScomputation_match}
    }
    \Else{
        $\eMS[i], lcp_p, lcp_s \leftarrow$ MSMisMatch$(P[i], q, \eMS[i+1].\Pos, \eMS[i+1].\Pos, lcp_p, lcp_s)$\label{lalgo:1_MScomputation_mismatch}
    }
    $q \leftarrow \LF(q)$\;
}
\Return{\eMS}
\caption{Computation of \eMS.\label{algo:MScomputation}}
\end{algorithm}

\paragraph*{Match case}

Suppose $\eMS[i+1 \ldots m)$ has already been processed and that $P[i] = T[\eMS[i+1].\Pos-1]$, namely we can further extend the longest match at the previous step by one position to the left.
Algorithm~\ref{algo:MSmatch} handles such scenario. 

Let $q$ be such that $\SA[q] = \eMS[i+1].\Pos-1$.
Hence, we have that $\eMS[i].\Pos = \eMS[i+1].\Pos-1$ and $\eMS[i].\Len = \eMS[i+1].\Len+1$ (line \ref{lalgo:2_extend_pos_len}).
At this point, we search for the greatest index $q_p$ among those smaller than $q$ such that $\BWT[q_p] = P[i]$.  
As discussed before, when $q_p = q-1$, then  $\LCP[\LF(q)] = \LCP[q]+1 = lcp_p +1$ (line \ref{lalgo:2_extend_lcp_p}).
Otherwise we can compute the $\LCE$ query between $\SA[q]$ and $\SA[q_p]$, to which we add $1$ for the match with $P[i]$ in correspondence of $\BWT[q]$ and $\BWT[q_p]$ (line \ref{lalgo:2_lcp_p_lce}).
Note that $\SA[q] = \eMS[i+1].\Pos$, while $q_p$ is the last index of its equal-letter run (and therefore $\SA[q_p]$ is stored).

Analogously we compute $lcp_s$ (lines \ref{lalgo:2_extend_lcp_s}-\ref{lalgo:2_lcp_s}) and, by Lemmas \ref{le:twice} and \ref{le:lcp_lce+1}, we assign to $\eMS[i].\Twice$ the maximum between $lcp_p$ and $lcp_s$.

\begin{algorithm}[tb]
\LinesNumbered
\DontPrintSemicolon
\BlankLine
\nl $\Pos\leftarrow \eMS[i+1].\Pos - 1, \Len \leftarrow \eMS[i+1].\Len + 1$\; \label{lalgo:2_extend_pos_len}
\nl $c \leftarrow \BWT.\Rank_{P[i]}(q)$\;
\nl \uIf{$\BWT[q-1] = P[i]$}{
    $lcp_p \leftarrow lcp_p + 1$ \label{lalgo:2_extend_lcp_p}
}
\nl \Else{
    \nl $q_p \leftarrow \BWT.\Select_{P[i]}(c)$\;
    \nl $lcp_p \leftarrow \Min(lcp_p, \LCE(\eMS[i+1].\Pos, \SA[q_p])) + 1$ \label{lalgo:2_lcp_p_lce}\;
}
\nl \uIf{$\BWT[q+1] = P[i]$}{
    $lcp_s \leftarrow lcp_s + 1$ \label{lalgo:2_extend_lcp_s}
}
\nl \Else{
    \nl $q_s \leftarrow \BWT.\Select_{P[i]}(c+2)$\;
    \nl $lcp_s \leftarrow \Min(lcp_s, \LCE(\eMS[i+1].\Pos, \SA[q_s])) + 1$ \label{lalgo:2_lcp_s}\;
}
\nl $\Twice \leftarrow \Max(lcp_p, lcp_s)$\;
\nl\Return{ $(\Pos, \Len, \Twice), lcp_p, lcp_s$ }\;
\caption{MSMatch$(P[i], q, \eMS[i+1].\Pos , \eMS[i+1].\Len, lcp_p, lcp_s$) \label{algo:MSmatch}}
\end{algorithm}
 
\paragraph*{Mismatch case}

We use Algorithm~\ref{algo:MSMismatch} when $q$ is such that $\BWT[q] \neq P[i]$.
We search for the index $q'$ in $\SA$ such that, among the suffixes of $T$ preceded by $P[i]$, at position $\SA[q']$ in $T$ starts the longest match with a prefix of $P[i+1..m)$.
Note that $T[\SA[q']-1] = P[i]$, and that $q'$ is either $q_p$ or $q_s$.

Hence, if $q_p = q - 1$, then by Lemma \ref{le:lcp_lce+1} the longest common prefix of $T[\SA[q']..n)$ and $P[i+1..m)$ has length $lcp'_p = lcp_p$ computed at the previous step (line \ref{lalgo:3_lcp_p+1}), otherwise we compute and store the $\LCE$ between $T[q..n)$ and $T[q_p..n)$ (line \ref{lalgo:3_lcp_p_lce}).
A symmetric procedure is used to compute $lcp'_s$ (lines \ref{lalgo:3_start_lcp_s}-\ref{lalgo:3_end_lcp_s}).

Without loss of generality, we assume that $lcp'_s \geq lcp'_p$, hence  $\eMS[i].\Pos = \SA[q_s] - 1$.
Then $\eMS[i].\Len = lcp'_s + 1$ and $lcp_p = lcp'_p + 1$ (line \ref{lalgo:3_len_lcp_s}).
We add $1$ to both $lcp'_s$ and $lcp'_p$ because both matches can be extended by one position on the left since $P[i] = \BWT[q_p] = \BWT[q_s]$. 
In order to compute $\eMS[i].\Twice$ we need to compute the value of $lcp_s$ with respect to $q_s$. 
To do so, we look for the smallest index $q'_s$ greater than $q_s$ such that $\BWT[q'_s] = P[i]$, and then apply a similar procedure to Algorithm 2 (lines \ref{lalgo:3_start_algo_2}-\ref{lalgo:3_end_algo_2}).
In this case, if $\BWT[q_s+1] = P[i]$, then we can retrieve $lcp_s$ from $\LCP[q_s+1]$ since $q_s$ is in correspondence of a run boundary.
Symmetrically we handle the case $lcp'_p > lcp'_s$ (lines \ref{lalgo:3_start_symmetric}-\ref{lalgo:3_end_symmetric}).
Finally, we compute $\eMS[i].\Twice$ by picking the maximum between $lcp_p$ and $lcp_s$.

\begin{theorem}
\label{th:mums_space_time_bounds}
Given a text $T[0..n)$, we can build a data structure in $\Oh(r+g)$ space that allows to compute the set $\MUMs$ between any pattern $P[0..m)$ and $T$ in $\Oh(m \cdot (t_{\LF} + t_{\LCE} + t_{\pred}))$ time.
\end{theorem}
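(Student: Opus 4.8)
The plan is to assemble the theorem from the structural lemmas of the previous subsections together with the streaming procedure of Section~\ref{ss:streaming_ems}, splitting the argument into three parts: (i) the space bound, (ii) the correct computation of the $\eMS$ array within the stated time, and (iii) the extraction of $\MUMs$ from $\eMS$ via Theorem~\ref{th:mum}. For the space bound I would start from the data structures of Boucher et al.~\cite{DBLP:conf/dcc/BoucherGIKLMNP021}: the run-length encoded $\BWT$ with the $\SA$ samples at run boundaries (the $r$-index, $\Oh(r)$ space) and the $\SLP$ ($\Oh(g)$ space), which already support $\LF$, $\LCE$, $\Rank$/$\Select$, and predecessor queries. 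To these I add the $\Oh(r)$ sampled $\LCP$ values described in Section~\ref{ss:streaming_ems} --- for each equal-letter run, the $\LCP$ between its first two and its last two suffixes. Since this adds only $\Oh(r)$ words, the total space is $\Oh(r+g)$.

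Next I would show that Algorithm~\ref{algo:MScomputation} computes $\eMS$ correctly in the claimed time. The $\Pos$ and $\Len$ fields are produced exactly as in~\cite{DBLP:conf/dcc/BoucherGIKLMNP021}, so their correctness and the $\Oh(m(t_{\LF}+t_{\LCE}+t_{\pred}))$ cost are inherited. For the $\Twice$ field, Lemma~\ref{le:twice} reduces the task to returning $\max\{\LCP[q],\LCP[q+1]\}$ for $q=\ISA[\eMS[i].\Pos]$, and Lemma~\ref{le:lcp_lce+1} shows that each such $\LCP$ value at $\LF(q)$ is either a $+1$ extension of the value maintained at the previous step (when the relevant neighbour lies in the same run, i.e.\ $q_p=q-1$ or $q_s=q+1$) or can be recomputed with a single $\LCE$ query to a neighbouring run-boundary suffix, whose $\SA$ value is sampled. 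I would then verify, separately for the match case (Algorithm~\ref{algo:MSmatch}) and the mismatch case (Algorithm~\ref{algo:MSMismatch}), that the maintained values $lcp_p,lcp_s$ remain equal to $\LCP[q]$ and $\LCP[q+1]$ up to the $+1$ of Lemma~\ref{le:lcp_lce+1}, and that the $\Oh(r)$ sampled $\LCP$ values supply exactly the entries needed whenever a required $\SA$ position is not sampled. Each iteration performs a constant number of $\LF$, $\LCE$, $\Rank$, $\Select$ (predecessor) queries and sampled-array accesses, so the whole array is built in $\Oh(m(t_{\LF}+t_{\LCE}+t_{\pred}))$ time.

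Finally I would extract $\MUMs$ using Theorem~\ref{th:mum}. In a single left-to-right scan of $\eMS$ I compute $\Lh$ by testing, for each $i$, the $\Oh(1)$ conditions of Lemma~\ref{le:unique_in_T} ($\eMS[i].\Twice<\eMS[i].\Len$) and of Lemma~\ref{le:maximality} ($i=0$ or $\eMS[i-1].\Len\le\eMS[i].\Len$); this costs $\Oh(m)$. It then remains to apply Lemma~\ref{le:unique_in_P}, which, because every $w_i$ with $i\in\Lh$ occurs uniquely in $T$, is precisely the statement that $w_i$ is a $\MUM$ iff its text interval $[\eMS[i].\Pos,\eMS[i].\Pos+\eMS[i].\Len)$ is not contained in the text interval of any other position of $\Lh$. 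I would solve this interval-containment problem by sorting the $|\Lh|\le m$ intervals by left endpoint, breaking ties by decreasing right endpoint, and sweeping once while maintaining the maximum right endpoint seen so far, flagging as non-unique in $P$ every interval whose right endpoint does not strictly exceed that maximum, with care taken so that all copies of a duplicated interval are flagged. The surviving positions are exactly the $\MUMs$ by Theorem~\ref{th:mum}.

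The hard part is the $\Twice$ computation, and I expect the delicate step to be the bookkeeping of $lcp_p$ and $lcp_s$ across $\LF$ steps in the mismatch case, where the maximal match itself changes position (to $q_s-1$ or $q_p-1$) and one must still recover the correct second-longest-match $\LCP$ at the new $\BWT$ position using only the sampled values and a single $\LCE$; aligning Lemma~\ref{le:lcp_lce+1}'s $+1$ extension with the run-boundary sampling so that no non-sampled $\SA$ position is ever queried is the crux of staying within $\Oh(r+g)$ space and $\Oh(1)$ queries per character. A secondary point is that the interval sort costs $\Oh(m\log m)$; I would argue this is dominated by the $\eMS$ computation under the standard $r$-index/$\SLP$ query costs (where the grammar-based $\LCE$ gives $t_{\LCE}=\Omega(\log m)$), so that the overall running time stays $\Oh(m(t_{\LF}+t_{\LCE}+t_{\pred}))$.
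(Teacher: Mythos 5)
Your space bound and your treatment of the $\eMS$ computation (parts (i) and (ii)) coincide with the paper's proof: both inherit the $\Oh(r+g)$ structures and the per-character query accounting from Boucher et al., and both reduce the $\Twice$ field to Lemmas~\ref{le:twice} and~\ref{le:lcp_lce+1} with the $\Oh(r)$ sampled \LCP\ values at run boundaries. Where you genuinely diverge is part (iii), the extraction of \MUMs\ from $\eMS$. You apply Lemma~\ref{le:unique_in_P} directly as an interval-containment problem and solve it by sorting the candidate intervals and sweeping --- which is in fact the paper's Algorithm~\ref{algo:retrieveMums} in Section~\ref{ss:mums_from_ems}, presented there as the \emph{practical} alternative, with cost $\Oh(|\Lh|\log|\Lh|)$. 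The proof of Theorem~\ref{th:mums_space_time_bounds} in the paper instead builds the \LCP\ and \ISA\ arrays of the pattern $P$ itself in $\Oh(m)$ time and checks, for each $i\in\Lh$, whether $\LCP[\ISA[i]]$ and $\LCP[\ISA[i]+1]$ are smaller than $\eMS[i].\Len$ --- i.e.\ it reuses the uniqueness criterion of Lemma~\ref{le:unique_in_T}, but on $P$ rather than $T$. This keeps the extraction step linear in $m$, so the total bound $\Oh(m(t_{\LF}+t_{\LCE}+t_{\pred}))$ holds unconditionally.

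The one real weakness in your route is exactly the point you flag at the end: your sweep costs $\Oh(m\log m)$, and you absorb it by assuming $t_{\LCE}=\Omega(\log m)$. That assumption is fine for the SLP-based implementation, but the paper explicitly remarks after the theorem that the result is meant to be parametric in the \LCE\ structure: any structure of size $\lambda$ supporting \LCE\ queries yields $\Oh(r+\lambda)$ space and $\Oh(m(t_{\LF}+t_{\LCE}+t_{\pred}))$ time. Against, say, a constant-time \LCE\ oracle (or when $m$ greatly exceeds $n$, since grammar-based \LCE\ cost scales with $\log n$, not $\log m$), your $\Oh(m\log m)$ term is no longer dominated and the claimed bound fails. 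So as a proof of the theorem as stated, your argument needs either the paper's linear-time pattern-side check (via linear-time suffix sorting of $P$ plus Kasai's algorithm) or an explicit restriction on the query-time parameters; as a description of the practically preferable algorithm, it is the one the paper itself implements.
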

\begin{proof}
Algorithm~\ref{algo:MScomputation}, Algorithm~\ref{algo:MSmatch} and Algorithm~\ref{algo:MSMismatch} show how to compute the $\eMS$ array in $m$ steps by using the data structure used in \cite{DBLP:conf/dcc/BoucherGIKLMNP021} of size $\Oh(r+g)$, to which we add $\Oh(r)$ words from the $\LCP$ array, preserving the space bound.
Since at each step the dominant cost depends on the $\LF$, $\LCE$, and $\Rank/\Select$ queries, \eMS is computed in $\Oh(m(t_{\LF} + t_{\LCE} + t_{\pred}))$ time. 
By Lemmas \ref{le:unique_in_T} and \ref{le:maximality}, we can build the set $\Lh$ in $\Oh(m)$ steps from the $\eMS$ array.
Recall that $\Lh$ contains those indices $i\in [0..m)$ such that $P[i..i+\eMS[i].\Len)$ is a maximal match that occurs only once in $T$.

Now we have to search those indices in $\Lh$ that are also unique in $P$.
A simple algorithm is to build both the $\LCP$ and $\ISA$ array of $P$, and then check for each $i \in \Lh$ if both $\LCP[\ISA[i]]$ and $\LCP[\ISA[i]+1]$ (or only  $\LCP[\ISA[i]]$ if $\ISA[i] = m$) are smaller than $\eMS[i].\Len$, i.e. the same property that we use to check the uniqueness in $T$.
Both structures can be build in $\Oh(m)$ time. 
The overall time is $\Oh(m(t_{\LF}+t_{\LCE}+t_{\pred})+m+m)$, which collapses to $\Oh(m(t_{\LF}+t_{\LCE}+t_{\pred}))$.
\end{proof}

Note that both $g$ and $t_{\LCE}$ depends on the grammar scheme chosen.
In fact, if exists a data structure of size $\lambda$ that supports $\LCE$ queries on a text $T$, then we can still compute $\MUMs$ in $\Oh(r+\lambda)$ space and $\Oh(m \cdot (t_{\LF} + t_{\LCE} + t_{\pred}))$ time, with $t_{\LCE}$ that depends on the data structure used.

\begin{algorithm}[tb]
\LinesNumbered
\DontPrintSemicolon
\BlankLine
\nl $c \leftarrow \BWT.\Rank_{P[i]}(q)$\;
\nl $q_p \leftarrow \BWT.\Select_{P[i]}(c)$\;
\nl $q_s \leftarrow \BWT.\Select_{P[i]}(c+1)$\;
\nl \uIf{$q_p = q-1$}{
    \nl $lcp'_p \leftarrow lcp_p$ \label{lalgo:3_lcp_p+1}
}
\nl \Else{
    \nl $lcp'_p \leftarrow \Min(\eMS[i+1].\Len, \LCE(\eMS[i+1].\Pos, \SA[q_p]))$ \label{lalgo:3_lcp_p_lce}
}
\nl \uIf{$q_s = q+1$ \label{lalgo:3_start_lcp_s}}{
    \nl $lcp'_s \leftarrow lcp_s$
}
\nl \Else{
    \nl $lcp'_s \leftarrow \Min(\eMS[i+1].\Len, \LCE(\eMS[i+1].\Pos, \SA[q_s]))$ \label{lalgo:3_end_lcp_s}
}
\nl \uIf{$lcp'_p \leq lcp'_s$}{
    \nl $\Pos \leftarrow \SA[q_s] - 1, \Len \leftarrow lcp'_s + 1$, $lcp_p \leftarrow lcp'_p + 1$ \label{lalgo:3_len_lcp_s}\;
    \nl $q'_s \leftarrow \BWT.\Select_{P[i]}(c+2)$ \label{lalgo:3_start_algo_2}\;
    \nl \uIf{$q'_s = q_s + 1$}{
        \nl $lcp_s \leftarrow \Min(\Len, \LCP[q_s + 1] + 1)$ 
    }
    \nl \Else{
        \nl $lcp_s \leftarrow \Min(\Len, \LCE(\SA[q_s], \SA[q'_s]) + 1)$ \label{lalgo:3_end_algo_2}
    }
    \nl $q \leftarrow q_s$\;
}
 \nl \Else{ \label{lalgo:3_start_symmetric}
    \nl $\Pos \leftarrow \SA[q_p] - 1, \Len \leftarrow lcp_p$, $lcp_s \leftarrow lcp'_s + 1$\;
    \nl $q'_p \leftarrow \BWT.\Select_{P[i]}(c-1)$\;
    \nl \uIf{$q'_p = q_p - 1$}{
        \nl$lcp_p \leftarrow \Min(\Len, \LCP[q_p] + 1)$ 
    }
    \nl \Else{
        \nl $lcp_p \leftarrow \Min(\Len, \LCE(\SA[q_p], \SA[q'_p]) + 1)$ \label{lalgo:3_end_symmetric}
    }
    \nl $q \leftarrow q_p$
}
\nl $\Twice \leftarrow \Max(lcp_p, lcp_s)$\;
\nl \Return{$(\Pos, \Len, \Twice), lcp_p, lcp_s$}
\caption{MSMismatch$(P[i], q, \eMS[i+1].\Pos , \eMS[i+1].\Len, lcp_p, lcp_s)$ \label{algo:MSMismatch}}

\end{algorithm}

\subsection{Computing \MUMs from \eMS}
\label{ss:mums_from_ems}

\begin{algorithm}[tb]

\LinesNumbered
\DontPrintSemicolon
\Input{Extended Matching Statistics $\eMS[0,m)$}
\Output{\MUMs}
\BlankLine
$\mathcal{L}, \MUMs \leftarrow \emptyset$\;
\For{$i \leftarrow 0$ \KwTo $m-1$}{
    \If{$(i = 0$ \Or $\MS[i-1].\Len \leq \MS[i].\Len) \textbf{ and } \MS[i].\Len > \MS[i].\Twice$ \label{lalgo:4_start_build_L}}{
        $\mathcal{L}.\add(i)$ \label{lalgo:4_end_build_L}}
}
\SortByPosition($\Lh$)\; 
$(p,\ell) \leftarrow (\eMS[\mathcal{L}[0]].\Pos, \eMS[\mathcal{L}[0]].\Len)$\;
$\Unique \leftarrow \True$\;
\For{$i \leftarrow 1$ \KwTo $|\Lh|-1$}{
    $(p',\ell') \leftarrow (\eMS[\mathcal{L}[i]].\Pos, \eMS[\mathcal{L}[i]].\Len)$\;
    \uIf{$p = p'$ }{
        \uIf{$\ell = \ell'$ }{
            $\Unique \leftarrow \False$
        }
        \ElseIf{$\ell < \ell'$ }{
            $\ell \leftarrow \ell'$\;
            $\Unique \leftarrow \True$
        }
    }
    \ElseIf{$\ell < \ell' + (p' - p)$ }{
        \If{$\Unique$}{$\MUMs.\add((p, \ell))$}
        $(p, \ell) \leftarrow (p', \ell')$\;
        $\Unique \leftarrow \True$
    }
}
\If{$\Unique$ }{
$\MUMs.\add((p, \ell))$\;
}
\Return{\MUMs}\;
\caption{retrieveMUMs(\eMS) \label{algo:retrieveMums}
}
\end{algorithm}

Here we present a different approach to compute the MUMs from the \eMS from the one in Theorem~\ref{th:mums_space_time_bounds}, that is of more practical use, and that does not require sorting the suffixes of $P$. We summarize this approach in Algorithm~\ref{algo:retrieveMums}.

Let ${\cal L}$ be the set of indexes $i \in [0..m)$ such that $P[i..\eMS[i].\Len) = T[\eMS[i].\Pos..\eMS[i].\Pos+\eMS[i].\Len)$ is a maximal and unique match in $T$.
By Lemmas \ref{le:unique_in_T} and \ref{le:maximality}, we can check in constant time if an index $i$ belongs to $\Lh$.
Note that building $\Lh$ (lines \ref{lalgo:4_start_build_L}-\ref{lalgo:4_end_build_L}) can be also executed in streaming while computing the $\eMS$ array (for simplicity of exposition of the algorithms we have separated the procedures).
Observe that a match $P[i..i+\eMS[i].\Len)$ such that $i \in \mathcal{L}$ is a \MUM if and only if it is not fully contained into another candidate, i.e. it does not exist $j \in \mathcal{L}\setminus\{i\}$ such that (i) $\eMS[j].\Pos \leq \eMS[i].\Pos$ and (ii) $\eMS[i].\Pos + \eMS[i].\Len \leq \eMS[j].\Pos + \eMS[j].\Len$ (Theorem \ref{th:mum}).
Hence, we sort the elements in $\Lh$ with respect to the position in $T$, and starting from $\mathcal{L}[0]$, we compare every entry with the following and if both factors are not contained into the other, we store in the set $\MUMs$ the one with the smallest starting position and keep track of the other one, otherwise we simply discard the one that is repeated and continue with the following iteration.

To handle the special case when two candidates $i \neq j\in \Lh$ are such that $T[\eMS[i].\Pos..\eMS[i].\Pos+\eMS[i].\Len) = T[\eMS[j].\Pos..\eMS[j].\Pos+\eMS[j].\Len)$, we further keep track whether the current maximal match is unique.
This final procedure, excluding the building time for $\Lh$ that is done in streaming, takes $\mathcal{O}(|\mathcal{L}|\log |\mathcal{L}|)$ time, since the sorting of the indexes in $\Lh$ dominates the overall cost.

\section{Experimental results}

We implemented our algorithm for computing MUMs and measured its performances on real biological datasets. We performed the experiments on a desktop computer equipped with 3.4\,GHz Intel Core i7-6700 CPU, 
8\,MiB L3 cache. and 16\,GiB of DDR4 main memory. The machine had no other significant CPU tasks running, and only a single thread of execution was used.
The OS was Linux (Ubuntu 16.04, 64bit) running kernel 4.4.0. All programs were compiled using {\tt gcc} version 8.1.0 with {\tt-O3} {\tt-DNDEBUG} {\tt -funroll-loops} {\tt -msse4.2} options. We recorded the runtime and memory usage using the wall clock time, CPU time, and maximum resident set size from \texttt{/usr/bin/time}.

\paragraph*{Setup}
We compare our method (\ours) with MUMmer~\cite{Marcais2018-ys} ({\tt mummer}). We tested two versions of {\tt mummer}, v3.27~\cite{Kurtz2004-cf} ({\tt mummer3}) and v4.0~\cite{Marcais2018-ys} ({\tt mummer4}). We executed {\tt mummer} with the {\tt -mum} flag to compute MUMs that are unique in both the text and the pattern, {\tt-l 1} to report all MUMs of length at least 1, and {\tt -n} to match only A,C,G,and T characters. We setup \ours to produce the same output as {\tt mummer}.
We did not test against Mauve~\cite{Darling2010-jw} because the tool does not directly reports MUMs. We also did not consider algorithms that does not produces an index for the text that can be queried with different patterns without reconstructing the index, e.g. the algorithm described in M{\"a}kinen et al.~\cite[Section 11.1.2]{makinen2015genome}. The experiments that exceeded exceeded 16 GB of memory were omitted from further consideration.

\paragraph*{Datasets}

\captionsetup[subtable]{position=bottom}
\begin{table}\centering
\subfloat[Collections of chromosome 19.\label{tab:chr19}]{
\begin{tabular}{ | r | r r |}
\hline
No. seqs & $n$ (MB)	& $n/r$ \\ 
\hline
\hline
1 & \num{59} & 1.92 \\
2 & \num{118} & 3.79 \\
4 & \num{236} & 7.47 \\
8 & \num{473} & 14.78 \\
16 & \num{946} & 29.19 \\
32 & \num{1892} & 57.63 \\
64 & \num{3784} & 113.49 \\
128 & \num{7568} & 222.23 \\
256 & \num{15136} & 424.93 \\
512 & \num{30272} & 771.53 \\
\hline
\end{tabular}
}
\hspace{4em}
\subfloat[Collections of SARS-CoV2 genomes.\label{tab:sars}]{
\begin{tabular}{ | r | r r | }
\hline
No. seqs & $n$ (MB)	& $n/r$ \\ 
\hline
\hline
\num{1562}	& \num{46}	& 459.57 \\ 
\num{3125}  & \num{93}	& 515.42 \\ 
\num{6250} & \num{186}	& 576.47 \\ 
\num{12500} & \num{372}	& 622.92 \\ 
\num{25000} & \num{744}	& 704.73 \\ 
\num{50000} & \num{1490}   & 848.29 \\ 
\num{100000} & \num{2983}	& 1060.07 \\ 
\num{200000} & \num{5965}	& 1146.24 \\ 
\num{300000} & \num{8947}	& 1218.82 \\ 
\hline
\end{tabular}
}
\caption{\label{tab:n/r} Dataset used in the experiments. For each collection of datasets of the human chromosome 19 ({\tt chr19}) dataset in Table~\ref{tab:chr19} and for the SARSCoV2 ({\tt sars-cov2}) dataset in Table~\ref{tab:sars}, we report the number of sequences (No. seqs), the length $n$ in Megabytes (MB), and the ratio $n/r$, where $r$ is the number of runs of the \BWT\ for each number of sequences in a collection .}
\end{table}

We evaluated our method using real-world datasets. We build our index for up to 512 haplotypes of human chromosome 19 from the 1000 Genomes Project~\cite{1000genomes} and up to 300,000 SARS-CoV2 genomes from EBI's COVID data portal~\cite{ebi21}. We provide a complete list of accession numbers in the repository. We divide the sequences into 11 collections of 1, 2, 3, 4, 8, 16, 32, 64, 128, 256, 512 chromosomes 19 (\texttt{chr19}) and 9 collections of 1,562, 3,125, 6,250, 1250,00, 25,000, 50,000, 100,000, 200,000, 300,000 genomes of SARS-CoV2 (\texttt{sars-cov2}). In both datasets, each collection is a superset of the previous one. In Table~\ref{tab:n/r} we report the length $n$ of each collection and the ratio $n/r$, where $r$ is the number of runs of the \BWT.

Furthermore, for querying the datasets, we used the first haplotype of chromosome 19 of the sample NA21144 from the 1000 Genomes Project, and the genome with accession number MZ477765 from EBI's COVID data portal~\cite{ebi21}.

\paragraph*{Results}

In Figure~\ref{fig:chr19} we show the construction and query time and space for \ours\ and {\tt mummer}. Since {\tt mummer} is not able to decouple the construction of the suffix tree from the query, for our method we report the sum of the running times for construction and query, and the maximum resident set size of the two steps. We observe that on {\tt chr19} {\tt mummer3} is up to 9 times faster than \ours, while using up to 8 times more memory, while {\tt mummer4} is up to 19 times faster than \ours, while using up to 7 times more memory. However both {\tt mummer3} and {\tt mummer4} cannot process more than 8 haplotypes of \texttt{chr19} due to memory limitations. \ours\ was able to build the index and query in 48 minutes for 512 haplotypes of \texttt{chr19} while using less than 11.5 GB of RAM. On {\tt sars-cov2}, {\tt mummer3} is up to 6.5 times faster than \ours, while using up to 24 times more memory, while {\tt mummer4} is up to 1.2 times slower than \ours, while using up to 25 times more memory. {\tt mummer3} was not able to process more than 25,000 genomes while {\tt mummer4} were not able to query mote than 12,500 genomes of \texttt{sars-cov2} due to memory limitations.

\begin{figure}[t!]
    \centering
    \subfloat[Construction time {\tt chr19}.]{\label{fig:chr19:time}
 	\centering
 		\includegraphics[width=0.49\textwidth]{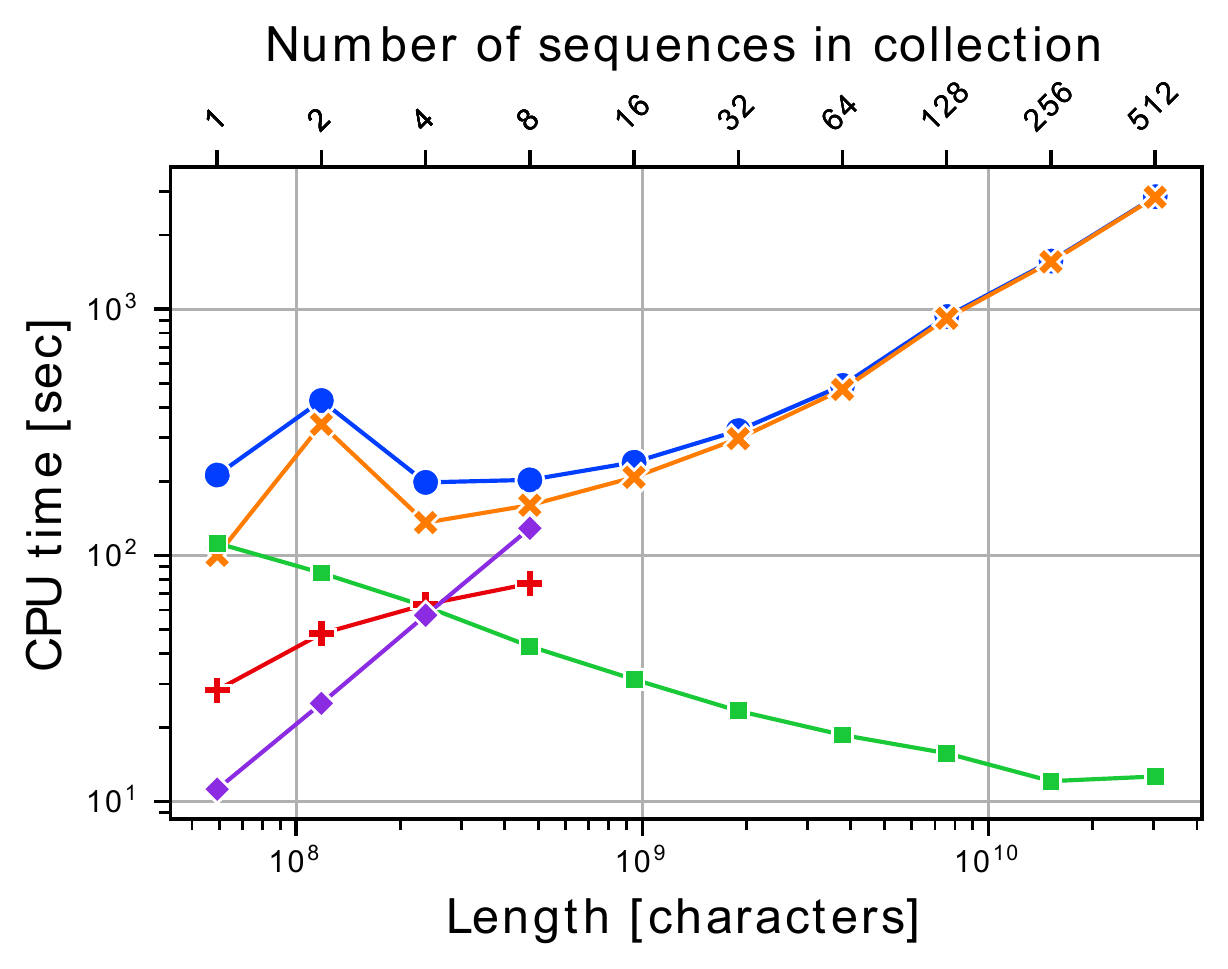}
    }~
   \subfloat[Peak memory {\tt chr19}.]{\label{fig:chr19:space}
     \centering
     		\includegraphics[width=0.49\textwidth]{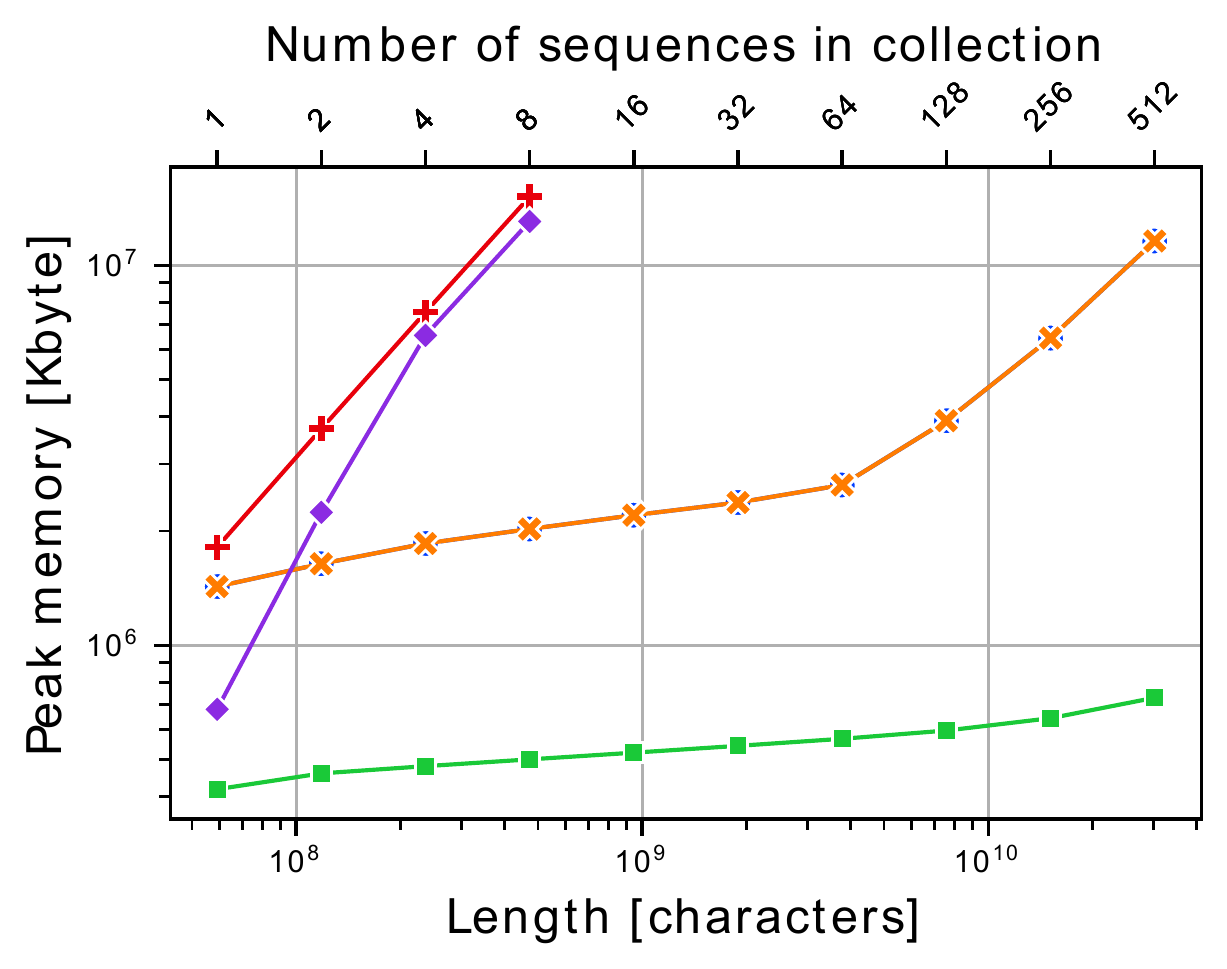}
    }\\
    \subfloat[Construction time {\tt sars-cov2}.]{\label{fig:sars:time}
 	\centering
 		\includegraphics[width=0.49\textwidth]{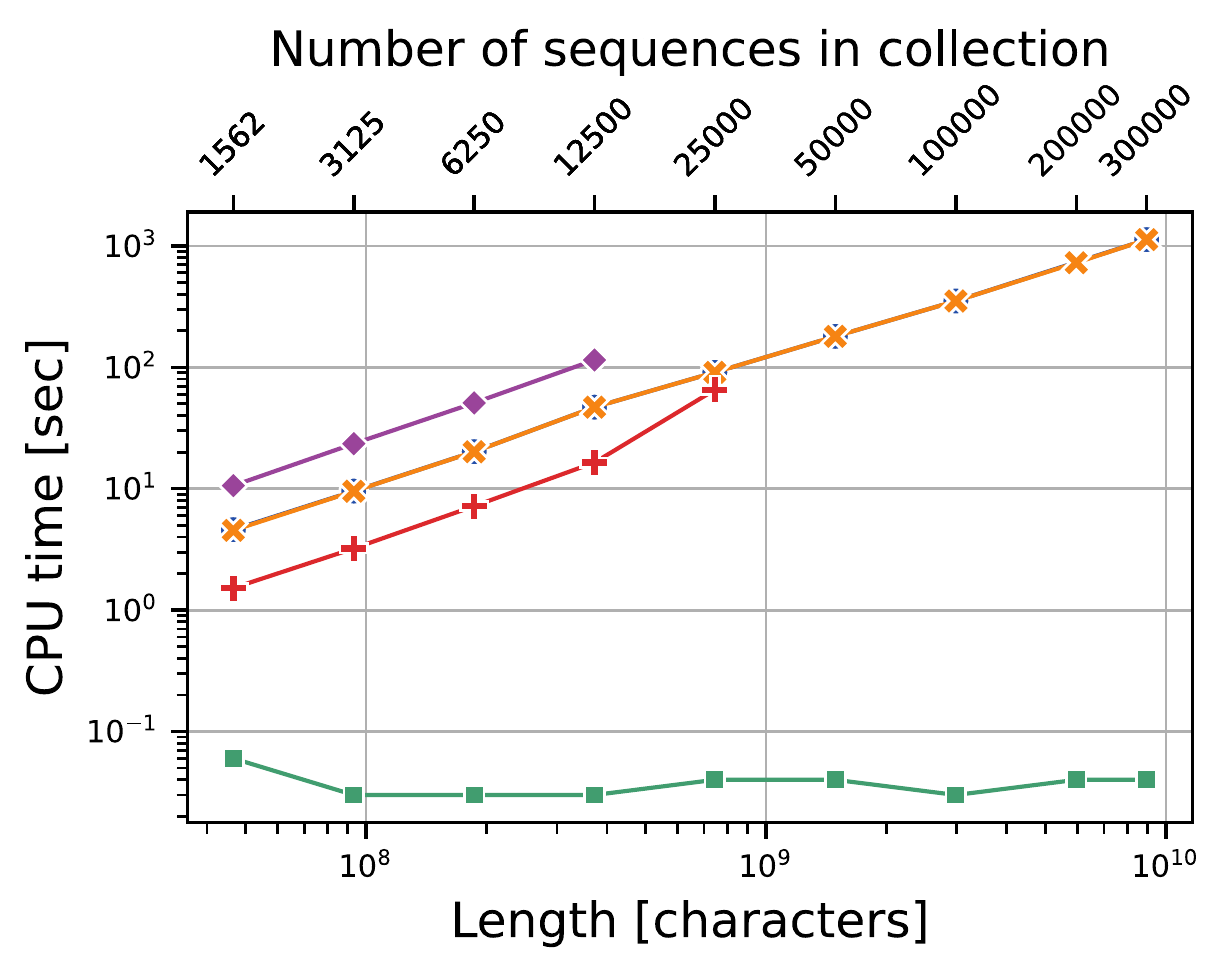}
    }~
   \subfloat[Peak memory {\tt sars-cov2}.]{\label{fig:sars:space}
     \centering
     		\includegraphics[width=0.49\textwidth]{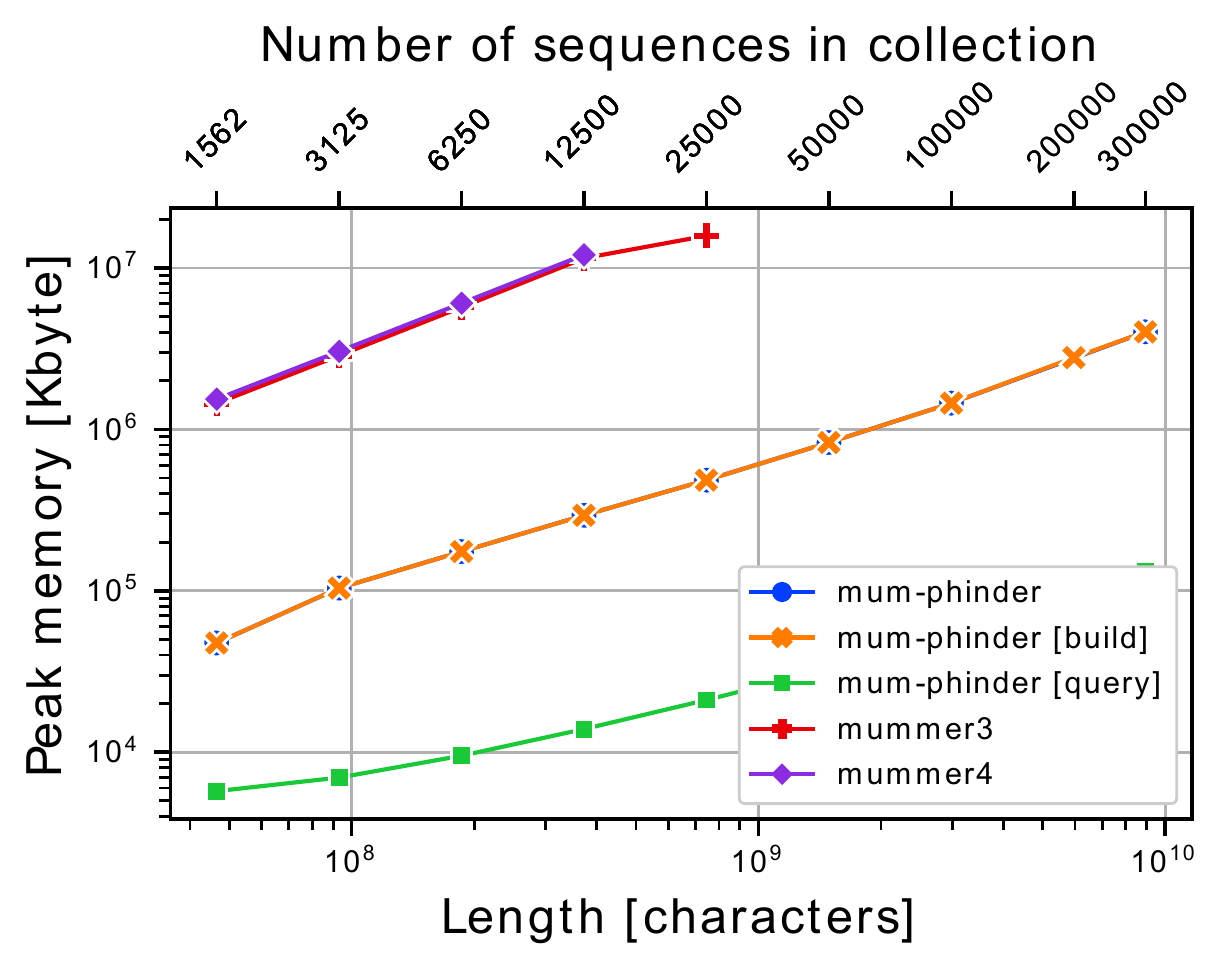}
    }
\caption{Human chromosome 19 and SARS-CoV2 genomes dataset construction CPU time and peak memory usage. We compare \ours{} with {\tt mummer3} and {\tt mummer4}. For \ours\ we report a breakdown of the construction (build) and query time and space. Note that for \ours\ we consider as time the sum of construction and query time, while for memory we consider the maximum between construction and query memory.\label{fig:chr19}}
\end{figure}

In Figure~\ref{fig:chr19} we also show the construction time and space for \ours. We observe that the construction time grows with the number of sequences in the dataset, however the query time decreases while increasing the number of sequences in the index with a 9x speedup when moving from 1 to 512 haplotypes of {\tt chr19}. A similar phenomenon is observed in~\cite{DBLP:conf/dcc/BoucherGIKLMNP021} and it is attributed to the increase number of match cases (Algorithm~\ref{algo:MSmatch}) while increasing the number of sequences in the index. From our profiling (data not shown) the more time-demanding part of the queries are \LCE\ queries, which are not performed in case of matches. This observation also motivates the increase in the control logic of Algorithm~\ref{algo:MSMismatch} to limit the number of \LCE\ queries to the essential ones.

\end{document}